\def\JAS{1} 
\def\JASA{0}

\documentclass[]{interact} 

\if\JASA1
{
\addtolength{\oddsidemargin}{-.5in}%
\addtolength{\evensidemargin}{-.5in}%
\addtolength{\textwidth}{1in}%
\addtolength{\textheight}{-.3in}%
\addtolength{\topmargin}{-.8in}%
} \fi

\usepackage{mathtools,breqn,comment,graphicx}
\usepackage{xcolor,epstopdf}
\usepackage[caption=false]{subfig}

\usepackage[numbers,sort&compress]{natbib}
\bibpunct[, ]{[}{]}{,}{n}{,}{,}

\theoremstyle{plain}
\newtheorem{theorem}{Theorem}[section]
\newtheorem{lemma}[theorem]{Lemma}
\newtheorem{corollary}[theorem]{Corollary}

\newcommand\blue{\color{black}}

\theoremstyle{definition}
\newtheorem{definition}[theorem]{Definition}
\newtheorem{example}[theorem]{Example}

\theoremstyle{remark}
\newtheorem{remark}{Remark}

\newtheorem{property}{Property}[section]

\newcommand\apj{\emph{The Astrophysical Journal}}%

\newcommand\mnras{\emph{Monthly Notices of the Royal Astronomical Society}}%

\newcommand\aap{\emph{Astronomy and Astrophysics}}

\usepackage{amsmath,amsthm}
\usepackage{graphicx,psfrag,epsf}
\usepackage{enumerate}
\usepackage{natbib}
\usepackage{url} 
\usepackage{amssymb,amsmath}
\usepackage{color,xcolor,nicefrac}

\def\blue{\color{black}}
\def\black{\color{black}}

\def\cstat{${C}$ statistic}
\def\cmin{${C}_{\mathrm{min}}$}

\def\chandra{\emph{Chandra}}

\newcommand{\blind}{0} 
\begin{document}
\if\JAS1
{
\articletype{Original Research Article}
}\fi
\if\JASA1
{
\articletype{Theory and methods}
\def\spacingset#1{\renewcommand{\baselinestretch}%
{#1}\small\normalsize} \spacingset{1}
\spacingset{1.5}
}\fi

\title{A semi--analytical solution to the maximum--likelihood
fit of Poisson data to a linear  model using the Cash statistic}
\if0\blind{
\author{
\name{Massimiliano Bonamente\textsuperscript{a}\thanks{
The author gratefully acknowledges support of \textit{NASA} 
\chandra\ grant AR6-17018X.} and
David Spence\textsuperscript{a}}
\affil{\textsuperscript{a} Department of Physics and Astronomy,
University of Alabama in Huntsville, Huntsville AL 35899 (U.S.A)}}
}\fi
\maketitle

\begin{abstract}

The \emph{Cash} statistic, also known as the \cstat,  
is commonly used 
for the analysis of low--count Poisson data, including data
with null counts for certain values of the independent variable. 
The use of this statistic is especially attractive for
low--count data that cannot be combined, or re--binned, 
without loss of resolution. 
	{\blue This paper presents a new maximum--likelihood solution for the
best--fit parameters of a linear model using the
	Poisson--based \emph{Cash} statistic.}

The solution presented in this paper provides a 
new and simple method to measure the best--fit parameters
of a linear model for any Poisson--based data, including data with
null counts. In particular, the method 
enforces the requirement that the best--fit linear model be non--negative throughout
the support of the independent variable.
The method is summarized in a simple algorithm to
fit Poisson counting data
of any size and counting rate with a linear model,
by--passing entirely the use of the traditional $\chi^2$ statistic.
\end{abstract}

\begin{keywords}
Probability; Statistics; Maximum--likelihood methods; Cash statistic; Parameter estimation
\end{keywords}

\section{Introduction}

The maximum--likelihood modelling of integer--valued Poisson data
can be accomplished with the use of the \emph{Cash}, or \cstat,
first proposed by \cite{cash1979}.
{\blue The Cash statistic
applies to a variety of counting data in use across the sciences.
One example is the counting of photons as a function of energy or wavelength, 
as commonly done by photon--counting detectors used in
astronomy \citep[e.g.,][]{bonamente2019}. Another example is the number or percentage 
of votes for a candidate in different
precincts or polling stations. In counting experiments such as these, the collected
data are in the form of  independent integer--valued variables.
The behavior of these variables as a function
of an independent variable (such as photon energy or number of voters in a precinct) 
 can be modelled
with the aid of the Cash statistic, which is obtained from
the  logarithm of the likelihood of the data with a model
for the distribution of the counts.}

It is well established that the asymptotic distribution of the \cstat,
in the large--count limit, is a $\chi^2$ distribution 
\citep[e.g.][]{cash1979,kaastra2017}. 
This limit is a result of the asymptotic convergence of a Poisson
distribution with a Gaussian distribution of same mean and variance,
which occurs for large values of the Poisson mean.
It is straightforward to study the distribution 
of the \cstat\ as a function of the parent mean $\mu$,
when the parent mean is specified \emph{a priori}, e.g., when the fitting model
has no free parameters. Such calculations are reported in 
\cite{cash1976,kaastra2017,bonamente2020},
showing, among other results, that the expectation of $C$ is significantly lower
than the expectation of a $\chi^2$ distribution with the same number of degrees of freedom.

The use of the \cstat\ for integer--valued, counting Poisson data is to be preferred
to the use of the more common $\chi^2$ distribution.
First, even in the large--count limit, use of the  $\chi^2$ fit statistic
leads to a bias in the best--fit parameters, due to the approximation
of the Gaussian variance with the measured Poisson counts \citep[e.g., ][]{humphrey2009}.
Second, use of the $\chi^2$ statistic often requires the combination
of datapoints, often referred to as \emph{binning} of the independent variable, 
to reach  a sufficient number of counts in each independent data point.
Such binning may result in an undesirable reduction in the resolution of the data, especially in the presence of
sharp features in narrow intervals of the independent variable, 
such as emission or absorption lines.
The \cstat, on the other hand, can be used on unbinned data that make use of the
full resolution of the data.

Use of the \cstat\ also comes with a number of challenges. 
First, it is not known exactly how free model parameters affect the distribution
of \cmin -- i.e., the \cstat\ 
obtained when optimizing variable model parameters -- especially in the low--count regime.
A study of the distribution of \cmin\ for a simple one--parameter constant model
was reported by \cite{bonamente2020}, although those results do not 
directly apply to more complex models such as the linear model.

Another challenge is the 
numerical complexity of the Poisson distribution and the associated
\cstat, which limits
the ability to obtain analytical solutions for the best--fit parameters
via the maximum--likelihood criterion. A key illustration of this challenge is that 
the simple linear model, which has an analytical solution for 
its best--fit model parameters and
their covariance matrix when using the $\chi^2$ statistic \citep[e.g.][]{bevington2003,bonamente2017book},
does not have an equally simple solution when using the \cstat.

This paper addresses the latter problem 
by presenting a new semi--analytical method to identify 
the maximum--likelihood solution of the parameters of a linear model
using the \cstat.
The method consists of the numerical solution of a simple analytical equation
that determines the best--fit value of one of the two parameters, 
and the use of an analytical function to calculate the other parameter.
It is also shown that not all Poisson data sets can be fit to an unconstrained
linear model, since the resulting best--fit model
may become negative, and therefore not usable for calculation of the Poisson--based \cstat.
In those cases, a simple generalization of the linear model is proposed
that enforces the non--negative requirement for the best--fit model. Such 
a  generalization
ensures that data of all sizes and counting rates can be fit with a 
linear model, and that such model is unique.
The results presented in this paper therefore ease the challenges presented 
by the numerical complexity of the Poisson distribution, by providing a simple
semi--analytical method to find the best--fit parameters of the linear model,
and making it possible to study the distribution of $C_{min}$ in the
low--count regime.

This paper is structured as follows. Section~\ref{sec:methods} introduces the maximum--likelihood method
and the Poisson--based \cstat, Section~\ref{sec:ML} presents the equations
for the model parameters and Section~\ref{sec:analytical} the solution of those equations.
Section~\ref{sec:acceptability} discusses conditions for the non--negativity of the best--fit linear model
and Section~\ref{sec:generalized} presents the {\blue extended} linear model that ensures an acceptable
model for all datasets. Finally, Section~\ref{sec:discussion} contains a discussion and conclusions.

\section{Methods for the maximum--likelihood analysis of Poisson data}
\label{sec:methods}
\subsection{Data model and the \cstat}
The data model considered in this paper is $N$ independent 
integer--valued measurements $y_i$, each Poisson--distributed
and measured at a fixed value $x_i$ of the independent variable $x$.
The data can also be viewed as originating from $M=\sum y_i$ independent events
that are sorted into $N$ independent "bins", each of size $\Delta x_i$ and centered
at $x_i$ with $y_i$ counts. This is a common  type of data for 
the physical sciences; for example,
the independent variable $x$ may be the wavelength of collected photons,
and $y$ the number of photons collected in a given wavelength range,
binned according to the resolution of the instrument. 
The data can therefore be summarized as a collection of $N$ independent 
two--dimensional variables
\[ \{x_i, y_i\},\; \text{ with } y_i \sim \text{Poiss}(\mu_i),\; i=1,\dots,N\]
where $\mu_i$ is the unknown parent mean of the Poisson distribution
of the counts $y_i$, collected in a fixed 
range $x_i \pm \Delta x_i/2$ of the independent variable.

In general, the relationship between the dependent and independent variables is
of the form $y = f(x)$, where $f(x)$ is an analytical function
with $m$ adjustable parameters $a_j$, $j=1,\dots,m$. 
The likelihood of the data with the model $f(x)$ is given by

\[ \mathcal{L} = \prod_{i=1}^N \dfrac{e^{-\mu_i} \mu_i^{y_i}}{y_i!} \]

where the adjustable parameters in $f(x)$ are
optimized so that the likelihood $\mathcal{L}$ is maximised, for the
given dataset. Instead of maximizing $\mathcal{L}$ directly, it is convenient
to minimize the function
\begin{equation}
C \equiv -2 \ln \mathcal{L} -B = 2 \sum_{i=1}^N \left( \mu_i - y_i + y_i \ln (y_i/\mu_i) \right) = \sum_{i=1}^N C_i
\label{eq:cstat}
\end{equation}
where
\[ B = 2 \sum_{i=1}^N \left( y_i -y_i \ln y_i + \ln y_i! \right)\]
is a quantity that is independent of the model, and
\[ C_i = 2 \left( \mu_i -y_i+y_i  \ln (y_i/\mu_i) \right). \] 
The statistic $C$ is known as the \emph{Cash} or \cstat, originally proposed 
by \cite{cash1979} and \cite{cousins1984} to model and analyze X--ray observations
of astronomical sources.
For a model with $m$ free parameters, the minimization of the \cstat\ 
yields $m$ equations, in general non--linear, that need to be solved to
obtain the maximum--likelihood best--fit estimates of the $m$ model parameters. 

\subsection{Linear models}

The linear model is a simple
and commonly used relationship between two variables. The customary parameterization
of a linear relationship between two variables $x$ and $y$ is
$ f(x) = a + b x$,
with $a$ and $b$ as the two adjustable parameters.
Another convenient and equivalent parameterization, 
suggested by \cite{scargle2013}, is of the form
\begin{equation}
f(x)=\lambda (1+a(x-x_A))
\label{eq:yscargle}
\end{equation}
with $\lambda$  and $a$ as the two adjustable parameters, 
and $x_A$ a fixed fiducial value of the $x$ variable. As will be
shown, this parameterization is convenient 
when taking the derivative of the terms $\ln f(x_i)$ in Equation~\ref{eq:cstat},
since it leads to a separation between the two parameters $\lambda$ and $a$.
This is the parameterization used in this paper~\footnote{A  maximum--likelihood
solution for the standard form of the linear model with the \cstat\ is reported
in \cite{bonamente2017book}. It leads to a set of two non--linear coupled equations,
whose numerical solution can be challenging.}.

The continuous function $f(x)$ is related to the parent mean $\mu_i$ of each Poisson variable
$y_i$ via an integral over the length of the $i$--th bin,
\begin{equation}
 y(x_i)= \mu_i = \int_{x_i-\Delta x_i/2}^{x_i+\Delta x_i/2} f(x) dx=f(x_i) \Delta x_i,
\label{eq:mui}
\end{equation}
where $y(x_i)$ is a step--wise function describing the Poisson mean 
for each bin, and the last equality applies because of the linearity of $f(x)$. 
It is therefore recognized that $f(x)$ is a non--negative {\blue density function} 
in units of \emph{counts per unit $x$} {\blue (i.e., not just \emph{counts})}, while its integral
over a range $\Delta x_i$ is the predicted Poisson mean $y(x_i)= \mu_i$ in that bin, 
in units of \emph{counts}. Therefore, the two functions $f(x)$ and $y(x_i)$ will
vary from each other according to the size of the bins, which is allowed to
be non--uniform. 
It is important to
stress that the function $f(x)$ must be non--negative, since it would not be meaningful
to have a Poisson variable with a negative parent mean.
The requirement  $f(x)\geq 0$ places a number of constraints on the solutions of the
maximum--likelihood equations that
 are discussed in Section~\ref{sec:acceptability}.

{\blue \subsection{Generalized linear models and other considerations for count data} 
\label{sec:GLM}
Integer--valued count data of the type considered in
this paper can be modelled with alternative statistics
that afford more flexibility than the single--parameter Poisson
distribution, in particular
with regards to over-- or under--dispersion of the data
\citep[e.g.,][]{bonat2018,haselimashhadi2018,sellers2010,shmueli2005}.
Moreover, the regression 
with one or several independent
variables may often require more complex models than a simple linear
model.
Generalized linear models and vector--generalized linear models
provide a comprehensive and flexible
framework for the regression of data, including a natural way to account
for non--negative Poisson means via a suitable link function
that relates the Poisson mean to the data and model parameters \citep[e.g.][]{nelder1972,mccullagh1989,
yee1996,yee2015,dobson2018}.
Within the context of generalized linear models, a convenient
link function would be in the form of $\log \mu = a+b x$, where 
the logarithm of the Poisson mean $\mu$, instead of the mean itself, is
modelled with a linear function \cite[as in, e.g.][]{elsayyad1973}. 
Such log--linear
models would ensure that the Poisson mean is always positive
(e.g., see chapter 6 of \citep{mccullagh1989}).

There are two main reasons for the present investigation of a simple linear regression using 
the standard Poisson distribution, instead of 
more versatile models or distributions. First is 
the goal to
obtain an analytical, and therefore computationally efficient,
solution for the best--fit parameters of the linear model for count data.
Currently, the only available analytical solution for the maximum--likelihood 
fit of the linear model
is for Gaussian--distributed variables, and this method is not accurate for low--count 
integer--valued data \citep[e.g.][]{bonamente2020}.
Second, there may be scientific reasons to prefer a simple linear
model versus, e.g., a log--linear model or other 
more complex models, particularly
when the data or an underlying parent model suggest
a direct linear relation between the dependent and independent variables
\citep[e.g.][]{mock2011, valenti2016}. 
The combination of these practical and scientific reasons
make it interesting to seek an analytical solution of the basic
linear regression with Poisson data.
\rm}

\section{Maximum likelihood solutions for the parameters of the  linear model}
\label{sec:ML}

\subsection{The \cstat\ for the linear model}
The linear model of Equation~\ref{eq:yscargle} is illustrated in Figure~\ref{fig:yscargle}.
To evaluate the \cstat\ of Equation~\ref{eq:cstat} with the linear model
of Equation~\ref{eq:yscargle}, start with

\begin{figure}[!t]
\centering
\includegraphics[width=5in]{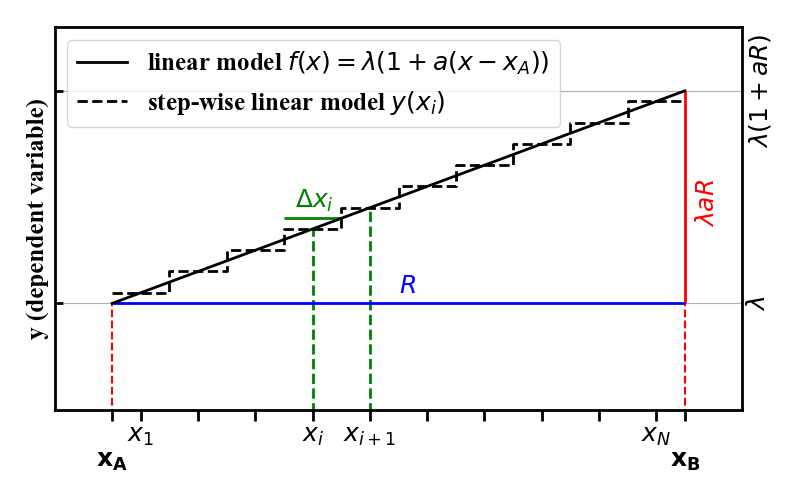}
\caption{Linear model according to Equation~\ref{eq:yscargle}. 
In this illustration, the functions $f(x)$, in units
of counts per unit $x$,
and $y(x_i)$, in units of total counts in the bin, follow one another closely because 
a bin size value of $\Delta x=1$ was used.}
\label{fig:yscargle}
\end{figure}

\begin{equation}
\sum_{i=1}^N \mu_i = \int_{x_A}^{x_B} f(x) dx = 
\lambda R \left(1+a \dfrac{R}{2}\right)
\label{eq:summu}
\end{equation}
where $R=x_B-x_A$ is the range of the $x$ variable, and it is assumed that
the data covers the {\blue entire} range $R$. {\blue There are situations
where measurements of the dependent variable $y$
are missing for certain intervals of the independent variable $x$, for example
because measurements are
not possible or because they are ignored in the analysis. 
An interval of the independent variable where data are missing, 
or are otherwise not used in the analysis,
will be referred to as a \emph{gap}}. If the data contain gaps in the 
$x$ variable, the limits of integration of Equation~\ref{eq:summu} will change.
Section~\ref{sec:gap} describes the  simple modification required to analyze
data that contain {\blue such} gaps.
 
The second term of the \cstat\ is simply 
\[ \sum_{i=1}^N y_i = M, \]
where $M$ is the total number of counts, and the final term is
\[ \sum_{i=1}^N  (y_i \ln (y_i) - y_i \ln \mu_i)=
	\sum_{i=1}^N  y_i \ln (y_i) - \sum_{i=1}^N  y_i \ln f(x_i) - \sum_{i=1}^N  y_i \ln \Delta x_i,  \]
where Equation~\ref{eq:mui} was used.
The \cstat\ for the linear model of Equation~\ref{eq:yscargle} is therefore
\begin{dmath}
C = 2\lambda R\left(1+\frac{aR}{2}\right) -2 M \ln \lambda 
	-2\sum_{i=1}^N  y_i \ln (1+a(x_i-x_A)) +D
\label{eq:cstatscargle}
\end{dmath}
where
\[
D = \left(2\sum_{i=1}^N  y_i \ln y_i
-2M -2\sum_{i=1}^N  y_i \ln \Delta x_i\right),
\]
is a  model--independent term that  does not have an effect 
in the subsequent minimization of the statistic.  Note that the
binning of the data is not required to be uniform, as will be
illustrated in subsequent examples. 

\subsection{Equations for the maximum--likelihood solutions}
Minimization of the \cstat\ in Equation~\ref{eq:cstatscargle} is obtained via

\[ \frac{\partial C}{\partial \lambda} = 2 R + a R^2 -2\frac{M}{\lambda} = 0, \]

which leads to 
\begin{equation}
\lambda(a) = \frac{M}{R\left(1+a \dfrac{R}{2}\right)}
\label{eq:lambda}
\end{equation}

and 

\[ \frac{\partial C}{\partial a} = \lambda R^2 -2 \sum_{i=1}^N y_i \dfrac{(x_i-x_A)}{1+a(x_i-x_A)} =0.\]
Substituting Equation~\ref{eq:lambda} to eliminate $\lambda$ leads to
\[ \dfrac{M R}{1+a\dfrac{R}{2}} - 2 \sum_{i=1}^N y_i \dfrac{(x_i-x_A)}{1+a(x_i-x_A)} =0 \]
and finally to 
\begin{equation}
a = -\dfrac{2}{R} + \dfrac{M}{\sum_{i=1}^N y_i \dfrac{(x_i-x_A)}{1+a(x_i-x_A)}}, 
\label{eq:a}
\end{equation}
which is the equation to solve for the values of the $a$ parameter.
Equation~\ref{eq:a} may be rearranged as
\[
1 + a\dfrac{R}{2} - \dfrac{MR}{2 g(a)} = 1+\dfrac{R}{2} \left(a -\dfrac{M}{g(a)} \right)
=0;\]
It is thus convenient to define
\begin{equation}
F(a) \equiv  1+\dfrac{R}{2} \left(a -\dfrac{M}{g(a)} \right)
\label{eq:Fa}
\end{equation}
as the function whose zeros are solutions of Equation~\ref{eq:a},
with $g(a)$ defined as
\begin{equation}
g(a)\equiv \sum_{i=1}^N y_i \dfrac{(x_i-x_A)}{1+a(x_i-x_A)}.
\label{eq:ga}
\end{equation} 
The problem of finding solutions for the parameter $a$ has therefore been
cast as finding the zeros of a function $F(a)$, which uses the function
$g(a)$ of Equation~\ref{eq:ga}.
One of the key properties to find the zeros of $F(a)$ is that
the zeros of $g(a)$ are points of singularity for $F(a)$, as will be 
shown in detail in Section~\ref{sec:analytical}.

In summary, $F(a)=0$ and Equation~\ref{eq:lambda} are the 
two equations to solve to find the maximum--likelihood 
estimators $a$ and $\lambda$ of the linear model of Equation~\ref{eq:yscargle}. 
The two equations are
uncoupled, and therefore the burden is limited to finding a
solution of
$F(a)=0$.~\footnote{
\blue It is useful to point out that a log--linear  model,
as obtained for example using a logarithmic link function within
the context of generalized linear models (see Section~\ref{sec:GLM}), 
would have led to coupled equations
involving the exponential of the parameters, in place of
Equations~\ref{eq:lambda} and \ref{eq:a}.
}
Then, Equation~\ref{eq:lambda} is used to find $\lambda=\lambda(a)$.
Notice that, in deriving Equation~\ref{eq:lambda} and \ref{eq:a}, no constraints were
enforced to ensure that the best--fit model 
is non--negative, which is necessary
for the applicability of the Poisson statistics and for the calculation \cmin.
These constraints are presented in Section~\ref{sec:acceptability}.

\section{Analytical properties of the maximum--likelihood solution of the linear model}
\label{sec:analytical}

The maximum--likelihood estimate of the parameters $\lambda$ and $a$ are obtained 
by first finding the zeros of the function $F(a)$ defined by Equation~\ref{eq:Fa}. 
For this purpose, it is necessary to establish a 
few analytical properties of the functions $F(a)$ and $g(a)$.
These properties will be used to study the location and properties 
of the zeros of the function $F(a)$.
It is necessary to discuss explicitly the simple case of data with $M=1$ count
before presenting general results for $M\geq 2$. The case of $M=0$ counts is not
interesting, since it represents a dataset with no positive measurements
throughout the range of the independent variable.

\subsection{Case of $M=1$} 
When $M=1$, or just one event ($y_j=1$) 
recorded in a bin centered at $x_j$, one obtains 
\[ g(a) = \dfrac{x_j-x_A}{1+a(x_j-x_A)}\] 
leading to
\[
 F(a)= 1+ a\dfrac{R}{2} - \dfrac{R (1 + a(x_j-x_A))}{2(x_j-x_A)} = 1 - \dfrac{R}{2(x_j-x_A)}
\]
which is constant independent of $a$.  The conclusion is that
 $F(a)=0$ has no solutions when the data have only one count, 
and it is therefore 
not possible to find a maximum--likelihood solution for the linear model
with $M=1$. A simple interpretation for this finding is that
it is not possible to constrain a two--parameter model with just
one non--null data point. Further discussion is provided in Section~\ref{sec:discussion}.

\subsection{General case of $M \geq 2$}
 It is possible to find certain properties of $g(a)$ and $F(a)$ that apply
in general. The properties will lead to a general criterion to identify solutions
of $F(a)=0$.

\begin{property}[Properties of the function $g(a)$]
\label{prop:ga}
The function $g(a)$, according to Equation~\ref{eq:ga}, is the sum
of $n \leq M$ terms 
of type 
\[g_j(a) = y_j \dfrac{(x_j-x_A)}{1+a(x_j-x_A)},\]
where $n$ is the number of bins with $y_j \neq 0$, and $j=1,\dots,n$ represents
the bins with non--null counts; 
when bins have no more
than one count, then $n=M$.
Therefore, the function $g(a)$ has $n$ points of singularity 
\[a_j=-\dfrac{1}{(x_j-x_A)} \]
that fall in the range $(-2/\Delta x_1, -1/(R-\Delta x_{N}/2)$.
Near the points of singularity, 
\begin{align*}
\lim_{a \to a_j^-} g(a) = - \infty \\
\lim_{a \to a_j^+} g(a) = + \infty.
\end{align*}
It is also
immediate to see that 
$g'(a)<0$ for all points 
where the function is continuous.
Therefore, the function $g(a)$ decreases monotonically from $+\infty$ to $-\infty$
between two consecutive points of singularity.
Moreover, the asymptotic limits are
\[\lim_{a \to \pm \infty} g(a)=0.\]
As a result, the function $g(a)$ has  $n-1$ zeros, each between consecutive 
points of singularity, as illustrated in Example~\ref{ex:M=2} and 
Figures~\ref{fig:M=2} and \ref{fig:M=3}. The zeros of $g(a)$ are  
points of singularity for $F(a)$.

In particular, when $M=2$ with $y_j=y_k=1$ and no counts in any of the other bins,
then the zero of $g(a)$ can be calculated analytically as
\begin{equation*}
 a_c = - \dfrac{1}{2} \left(\dfrac{1}{x_j-x_A} + \dfrac{1}{x_k-x_A}\right) < 0 \;\; \text{ (case of $M=2$ only)}.
\label{eq:ac}
\end{equation*}
For the general case of $M>2$, the zeros of $g(a)$ must be calculated numerically,
as explained in the following.
\end{property}

\begin{property}[Behavior of $F(a)$ near the points of singularity]
\label{prop:FaS} 
Since $g(a)$ is continuous with $g'(a)<0$ between its $n$ points of singularity,
$g(a) <0$ immediately to the left of the singularity and $g(a)>0$
immediately to the right. With $a_s$ any of the $n-1$ points of singularity of $F(a)$,
this implies that
\begin{equation*}
 \begin{cases}
\lim\limits_{a\to a_s^-} F(a) = -\infty\\
\lim\limits_{a\to a_s^+} F(a) = +\infty.
\end{cases}
\end{equation*}
\end{property}

\begin{property}[Asymptotic limit of $F(a)$] 
\label{prop:Fae}
	The asymptotic limit of $F(a)$ at $\pm \infty$, {\blue defined
	as
	\[ F_{\infty} \equiv \lim_{a\to \pm \infty} F(a),\]}
	
	can be evaluated 
via the \emph{De L'Hospital} rule for the associated function $a-M/g(a)$,
which is an indeterminate form of the type $0/0$: 
\[
\lim_{a\to \pm \infty} \left(\dfrac{a g(a) -M}{g(a)}\right) = 
%
\lim_{a\to \pm \infty} \left(\dfrac{g(a) + a g'(a)}{g'(a)}\right)
= - \dfrac{1}{M} \sum_{i=1}^N \dfrac{y_i}{x_i-x_A} = - \dfrac{1}{M} \sum_{j=1}^M \dfrac{1}{x_j-x_A}.
\]
This property can be proven with a few steps algebra, 
by turning the sum over the $N$ bins to an
equivalent sum over the individual $M$ counts.
Therefore the asymptotic limit of $F(a)$ becomes
\begin{equation}
\lim_{a\to \pm \infty}  F(a) =  F_{\infty} =  1 - \dfrac{R}{2} 
	\left( \dfrac{1}{M} \sum_{i=1}^N \dfrac{y_i}{x_i-x_A} \right). 
\label{eq:Flimitinfinity2}
\end{equation}
\end{property}

\begin{figure}[!t]
\includegraphics[width=3.2in]{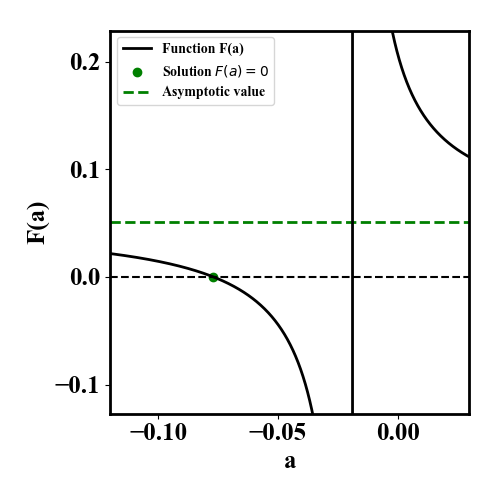}
\hspace{-0.5cm}
\includegraphics[width=3.2in]{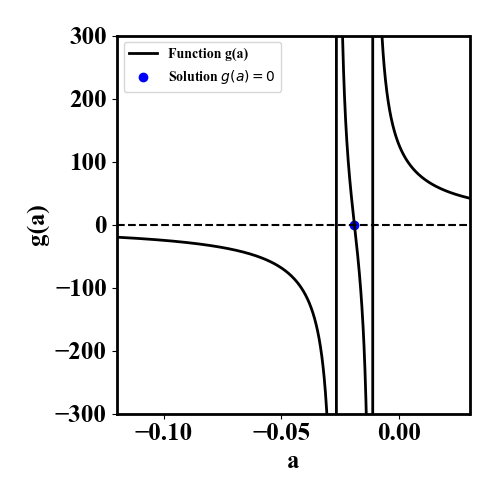}
\caption{(Left:) Function $F(a)$ for a representative
data set with $M=2$, with 100 data points $x_i=i-0.5$, $i=1,\dots,100$,
and $y_{38}=y_{89}=1$.
(Right:) Function $g(a)$ for the same data set.}
\label{fig:M=2}
\end{figure}

\begin{property}[Sign of $F'(a)$]
\label{prop:Faprimee}
The derivative $F'(a)$ is calculated according to 
\begin{equation}
F'(a) = \dfrac{R}{2} +\dfrac{MR}{2}\dfrac{g'(a)}{g^2(a)}.
\label{eq:Faprime}
\end{equation}
For $M \geq 2$,
\[ \dfrac{g'(a)}{g^2(a)} = - \dfrac{z_1^2+ \dots + z_M^2}{(z_1+\dots+z_M)^2}, \text{ with } z_j\equiv\dfrac{x_j-x_A}{1+a(x_j-x_A)} 
\]
in which the sum is over all individual events, with some $z_j$ identical
to each other if there are bins with more than one count.
Using the \emph{Cauchy--Schwartz} inequality
\[ \dfrac{1}{M} \left( \sum_{j=1}^M z_j \right)^2 \leq \sum_{j=1}^M z_j^2 
\]
leads to 
\begin{equation}
	\dfrac{g'{\blue (a)}}{g^2(a)} \leq -\dfrac{1}{M}.
\label{eq:cauchy-schwartz}
\end{equation}
Finally, using Equation~\ref{eq:cauchy-schwartz} into Equation~\ref{eq:Faprime} leads to
the conclusion that $F'(a) \leq 0$ for all points of continuity of $F(a)$. 
\end{property}

These properties of the function $F(a)$ are also illustrated
in Example~\ref{ex:M=2} and
Figures~\ref{fig:M=2} and \ref{fig:M=3}.
The properties of $F(a)$ and $g(a)$ can be used to state a general
criterion to locate all solutions of the equation $F(a)=0$. 

\begin{figure}[!t]
\includegraphics[width=3.0in]{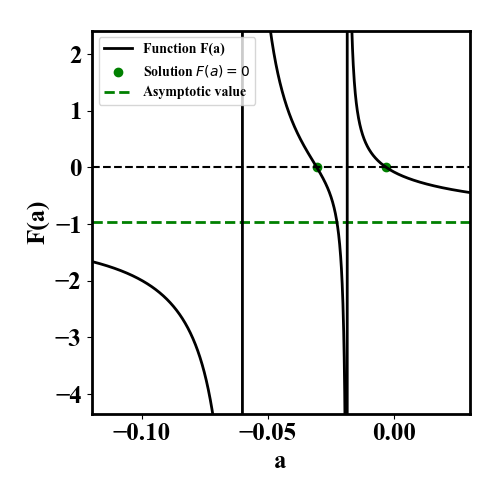}
\includegraphics[width=3.0in]{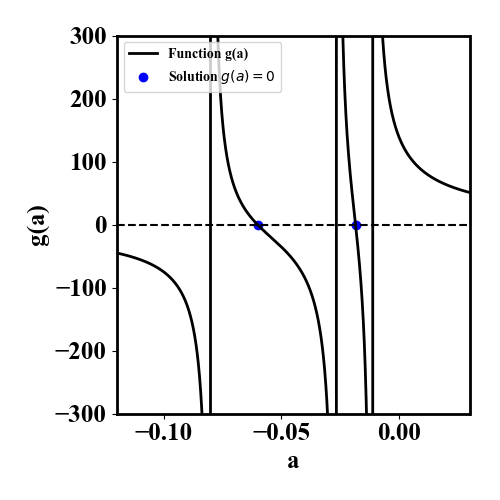}
\caption{(Left:) Function $F(a)$ for a representative
data set with $M=3$, with 100 data points $x_i=i-0.5$, $i=1,\dots,100$,
and $y_{13}=y_{38}=y_{89}=1$.
(Right:) Function $f(a)$ for the same data set.}
\label{fig:M=3}
\end{figure}

\begin{lemma}[Location of zeros of $F(a)$]
The function $F(a)$ has 
 $n-1$ zeros, where $n\leq M$ is the number of bins with non--null counts.
Of these, $n-2$ zeros are found between the $n-1$ points of singularity of $F(a)$, also zeros of $g(a)$.
The remaining zero is found either
to the left of the smallest point of singularity, if the asymptotic limit is $F_{\infty}>0$, 
or  to the right of the largest singularity, if $F_{\infty}<0$.
\end{lemma}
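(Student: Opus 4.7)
The plan is to partition $\mathbb{R}$ using the $n-1$ singularities $a_{s,1}<a_{s,2}<\dots<a_{s,n-1}$ of $F(a)$---which are exactly the zeros of $g(a)$ identified in Property~\ref{prop:ga}---into $n$ open subintervals, and then to count the roots of $F$ in each subinterval using Properties~\ref{prop:ga}--\ref{prop:Faprimee}. On each of these subintervals, $g$ does not vanish, so $F$ is continuous; moreover $F'(a)\leq 0$ by Property~\ref{prop:Faprimee}, so $F$ is monotonically non-increasing, which will bound the number of zeros in each subinterval by one.

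First I would treat the $n-2$ bounded intervals $(a_{s,k},a_{s,k+1})$ for $k=1,\dots,n-2$. Property~\ref{prop:FaS} gives $F(a)\to +\infty$ as $a\to a_{s,k}^+$ and $F(a)\to-\infty$ as $a\to a_{s,k+1}^-$. An intermediate-value argument together with the monotonicity just noted then yields exactly one zero per bounded piece, contributing $n-2$ zeros.

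Next I would analyse the two unbounded subintervals using the asymptotic value $F_\infty$ from Property~\ref{prop:Fae}. On $(-\infty,a_{s,1})$, $F$ is non-increasing with $F(a)\to F_\infty$ as $a\to-\infty$ and $F(a)\to-\infty$ as $a\to a_{s,1}^-$, so $F$ vanishes exactly once if $F_\infty>0$ and not at all if $F_\infty<0$. Symmetrically, on $(a_{s,n-1},+\infty)$, $F$ descends from $+\infty$ to $F_\infty$, contributing one zero iff $F_\infty<0$ and none iff $F_\infty>0$. Since these two cases are mutually exclusive and (barring the degenerate $F_\infty=0$) exhaustive, exactly one of the outer intervals contributes an additional zero, giving the grand total $(n-2)+1=n-1$, with the sign of $F_\infty$ selecting the correct outer interval exactly as stated in the lemma.

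The main obstacle is making the monotonicity airtight enough to rule out multiple zeros in the same subinterval, since Property~\ref{prop:Faprimee} only supplies $F'(a)\leq 0$. The strictness I need comes from the equality case of the Cauchy--Schwartz bound in Equation~\ref{eq:cauchy-schwartz}: equality forces all $z_j\equiv(x_j-x_A)/(1+a(x_j-x_A))$ to coincide, which in turn forces the underlying event positions $x_j$ to coincide. Under the hypothesis $n\geq 2$ there are at least two distinct event positions, so Cauchy--Schwartz is strict for every $a$, whence $F'(a)<0$ on every subinterval and $F$ is injective there. A brief separate remark should cover the non-generic boundary case $F_\infty=0$, in which the ``extra'' zero drifts off to infinity and the true count of finite zeros drops to $n-2$.
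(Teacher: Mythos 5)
Your proof is correct and takes essentially the same route as the paper, whose own proof is just a terse appeal to the same ingredients (Properties~\ref{prop:ga}, \ref{prop:FaS} and \ref{prop:Faprimee} plus the asymptotic limit $F_{\infty}$): one zero in each interval between consecutive singularities of $F(a)$, and one external zero on the side selected by the sign of $F_{\infty}$. Your two additions --- strict monotonicity from the equality case of the Cauchy--Schwartz bound (valid, since $z_j$ is injective in $x_j$ for fixed $a$, so $n\geq 2$ distinct non--null bins force strict inequality) and the remark on the degenerate case $F_{\infty}=0$ --- merely make explicit details the paper leaves implicit.
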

\begin{proof}
The result is a direct consequence of the presence of $n$ points 
of singularity for $F(a)$ (property~\ref{prop:ga}), the negative sign 
of $F'(a)$ between points of singularity (property~\ref{prop:Faprimee})
and the asymptotic limit of $F(a)$ at the points of singularity (property~\ref{prop:FaS}).
\end{proof}

Properties of $g(a)$ and $F(a)$ are illustrated in the following
example, which examines the behavior of the two functions
for two simple datasets with $M=2$ and $M=3$.

\begin{example}[Two datasets with $M=2$ and $M=3$]
\label{ex:M=2}
\label{ex:M=3}
Two sample datasets with $M=2$ and $M=3$ are shown respectively in Figure~\ref{fig:M=2}
and \ref{fig:M=3}.
For the $M=2$ dataset, with $n=2$ bins with non--null counts, the function $F(a)$
has just one point of discontinuity for $F(a)$, also the zeros of $g(a)$.
This point of discontinutiy divides the domain of $a$ into
two intervals, with $F(a)$ monotonically decreasing within these
intervals, as shown in Figure~\ref{fig:M=2}.
For the $M=3$ dataset, with $n=3$ bins with non--null counts, $F(a)$ has $n-1=2$ points
of discontinuity corresponding to the two zeros of $g(a)$, which were in turn
found between the $n=3$ points of singularity of $g(a)$, as shown in Figure~\ref{fig:M=3}.
\end{example}
The $n-1$ zeros of $F(a)$ are all possible solutions
of the maximum--likelihood method for the linear model.
The following section discusses if these solutions are acceptable, in the
sense that they produce a model $f(x)$ that is non--negative 
throughout the domain of the $x$ variable.

\section{Acceptable solutions for the best--fit parameters of the linear model}
\label{sec:acceptability}
In Section~\ref{sec:analytical} it was shown that there are several possible
solutions for the maximum--likelihood parameters of the linear model of Equation~\ref{eq:yscargle}.
In particular, data with $M$ total counts, distributed over $n\leq M$ 
of the $N$ available bins,
have $n-1$ possible values of $a$ that are a solution
of the maximum--likelihood equation $F(a)=0$, with 
the corresponding value of $\lambda$ provided
by Equation~\ref{eq:lambda}.
This section addresses the additional requirement that a model be non--negative
in all bins, i.e., that a solution be acceptable, so that Poisson statistics apply and the \cstat\ can be calculated. 
\begin{definition}[Acceptable solution of $F(a)=0$]
A solution of the $F(a)=0$ equation is said to be \emph{acceptable}
if it leads to a non--negative model throughout the support of the
independent variable. Specifically,
the function  must satisfy the condition that
$y(x_i) \geq 0, \; i=1,\dots,N$, so that the
parent mean of the Poisson distributions is always non--negative.
\end{definition}
It will be shown in this section that there is 
\emph{at most} one acceptable solution for any Poisson
data set. Cases without an acceptable solution will be examined in 
Section~\ref{sec:generalized}, where a simple 
generalization of the linear model is provided that
ensure \emph{one and only one} acceptable solution for the fit of any data set
to a linear model.

\subsection{General conditions for acceptability}
 Given that the model is linear,
the condition of acceptability is satisfied by simply requiring that 
the Poisson mean for the first and last bins,
$y(x_1)$ and $y(N)$, are both non--negative, 
\[
\begin{cases}
\lambda \cdot (1+a \Delta x_1/2) \geq 0 \\
\lambda \cdot (1+a(R-\Delta x_{N}/2)) \geq 0.
\end{cases}
\] 
Notice how the model $f(x)$ may still become negative in a portion
of either the first or the last bin, but the linearity of the model simply requires that $f(x)$ at
the mid--point of the bin be non--negative, in order to ensure that the Poisson
mean for the bin is non--negative.

Substituting Equation~\ref{eq:lambda}, the conditions become a function of $a$ alone,
\begin{equation}
\begin{cases}
 \dfrac{M}{R\left(1+a\dfrac{R}{2}\right)} (1+a \Delta x_1/2) \geq 0 \\
\dfrac{M}{R\left(1+a\dfrac{R}{2}\right)} (1+a(R-\Delta x_{N}/2)) \geq 0,
\end{cases}
\label{eq:conditionsAcceptability}
\end{equation}
This equation can be used to find a range of the variable $a$ that contains
acceptable solutions. Therefore, a solution of $F(a)=0$ is acceptable if and only if it
satisfies Equations~\ref{eq:conditionsAcceptability}.
This property leads to the following result regarding acceptable solutions:

\begin{lemma}[Necessary and sufficient condition for the acceptability of a solution of
$F(a)=0$]
A solution of $F(a)=0$ is acceptable if and only if it is found
outside of the interval 
\[ \left(-\dfrac{2}{\Delta x_1}, -\dfrac{1}{R-\Delta x_{N}/2} \right).\]
\label{lemma:acceptability}
\end{lemma}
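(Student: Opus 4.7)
The plan is to translate the pair of acceptability inequalities in Equation~\ref{eq:conditionsAcceptability} into a sign analysis of three linear factors. Since $M,R>0$, the two conditions reduce to requiring that the ratios
\[
\frac{1+a\,\Delta x_1/2}{1+aR/2}\geq 0\qquad\text{and}\qquad\frac{1+a(R-\Delta x_N/2)}{1+aR/2}\geq 0
\]
hold simultaneously. The relevant critical points are $a_1 = -2/\Delta x_1$, where the first numerator vanishes; $a_N = -1/(R-\Delta x_N/2)$, where the second numerator vanishes; and the common pole $a_0 = -2/R$ of the denominator.

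The first step is to establish the ordering $a_1 \leq a_0 \leq a_N$. Each inequality is elementary: $\Delta x_1 \leq R$ gives $-2/\Delta x_1 \leq -2/R$, while $\Delta x_N \leq R$ gives $R-\Delta x_N/2 \geq R/2$ and hence $-1/(R-\Delta x_N/2) \geq -2/R$. The ordering is strict whenever more than one bin is present, so these three points partition the real line into four intervals.

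The second step is a sign check in each of these four intervals. For $a > a_N$ all three linear factors are strictly positive, so both ratios are positive; for $a < a_1$ all three factors are negative, so both ratios are again positive. For $a_1 < a < a_0$ the first numerator is positive while the denominator is negative, so the first ratio is negative and acceptability fails; for $a_0 < a < a_N$ the denominator flips to positive while the second numerator remains negative, so the second inequality is broken. At the endpoints $a = a_1$ and $a = a_N$ the corresponding ratio vanishes while the other is strictly positive, so both conditions hold with equality, confirming that the forbidden region is the open interval $(a_1, a_N)$.

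The conceptual content is the observation that the denominator's pole $a_0$ lies strictly inside $(a_1, a_N)$, so the denominator must change sign somewhere in this interval and one of the two non-negativity conditions is necessarily violated there. The main bookkeeping task is confirming the ordering $a_1 \leq a_0 \leq a_N$ from the geometric constraints that the first and last bins fit within $R$, and handling the endpoints correctly; no delicate estimates beyond a finite sign chart are required.
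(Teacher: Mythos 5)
Your proof is correct and follows essentially the same route as the paper: both reduce Equation~\ref{eq:conditionsAcceptability} to a sign analysis of the factors $1+a\Delta x_1/2$, $1+a(R-\Delta x_N/2)$ and $1+aR/2$, using the ordering $-2/\Delta x_1 \leq -2/R \leq -1/(R-\Delta x_N/2)$. The only difference is presentational --- you organize it as a four-interval sign chart with explicit endpoint checks, whereas the paper splits into the two cases $a<-2/R$ and $a>-2/R$ --- so no further comment is needed.
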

\begin{proof}
The conditions of Equations~\ref{eq:conditionsAcceptability} can be used to find
values of the variable $a$ that are acceptable solutions of $F(a)=0$.
For $a < -2/R$, i.e., when the denominators in Equation~\ref{eq:conditionsAcceptability}
are negative, the two conditions are satisfied when $a < -2/\Delta x_1$,
%
since $\Delta x_1 < R-\Delta x_{N}/2$.
Likewise, for $a> -2/R$, the two conditions 
are satisfied when $a \geq -1/(R-\Delta x_{N}/2)$.
Therefore, acceptable solutions can be found in the range
\begin{equation}
\begin{cases}
\begin{aligned} &a < -2/\Delta x_1 &\text{ (and thus $\lambda <0$)}\\
&a> -1/(R-\Delta x_{N}/2) &\text{ (and thus $\lambda >0$).}
\end{aligned}
\end{cases}
\end{equation}
\end{proof}
Solutions with $a \in (-2/\Delta x_1, -1/(R-\Delta x_{N}/2)$ lead to a model that becomes negative in some of the
bins, and therefore they are not acceptable. 
Figure~\ref{fig:lambda} shows the function $\lambda(a)$ and illustrates the
range of acceptable values for the parameters.
\begin{figure}[th]
\centering
\includegraphics[width=4.5in]{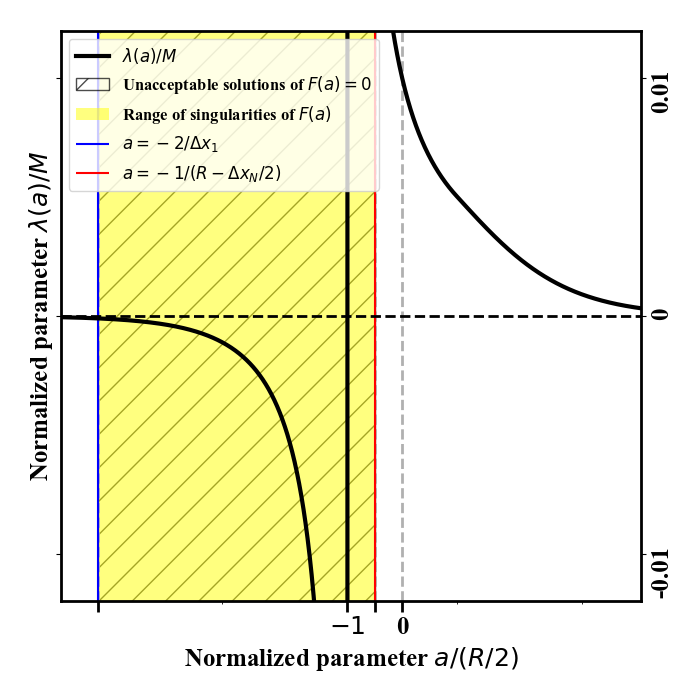}
\caption{(Left:) Parameter $\lambda$ as a function of the parameter $a$,
and range of acceptability of $a$.
For values of $-2/\Delta x_{1} < a< -1/(R-\Delta x_{N}/2)$ the parameters $(a,\lambda)$ result
in a linear model that becomes negative in some of the bin, and therefore not acceptable
for use with the Poisson distribution.   
The smallest and largest points of singularity
 of $g(a)$ also correspond to the boundaries of this range, 
according to Property~\ref{prop:ga}. The zeros of $g(a)$, also points of singularity
for $F(a)$, are therefore inside this range.
 The
$x$ axis was plotted with the \texttt{symlog}
option that allows a near--logarithmic scaling
across a value of zero.}
\label{fig:lambda}
\end{figure}

\begin{example}[$M=2$ data with no acceptable solution]
The acceptability of the maximum--likelihood solution is illustrated
with the data used for Figure~\ref{fig:M=2}.
The only singularity of the $F(a)$ function is 
\[ a_c = -\dfrac{1}{2}\left(\dfrac{x_{38}}{1+a x_{38}} + \dfrac{x_{89}}{1+a x_{89}}\right) = -0.019
\]
which is $a_c > -2/R$, and with a positive asymptotic value of $F_{\infty}=0.051$, as
shown in Figure~\ref{fig:M=2}. 
The $F(a)=0$ solution is  $a=-0.077$, which falls in the range
of unacceptable solutions. In fact, the corresponding $\lambda(a)=-0.007$ results in a best--fit
model that is negative in some of the initial bins, e.g., $y(x_1)=\lambda ( 1 + a \Delta x/2)=
-0.0067$.
This best--fit model cannot be used
to calculate the goodness--of--fit \cstat, and therefore cannot be 
accepted as a maximum--likelihood solution. 
\end{example}

\subsection{General method to locate acceptable solutions}

\begin{lemma}[Necessary condition for the acceptability of $F(a)=0$ solutions]
Solutions of $F(a)=0$ within points of singularity of $F(a)$ are always unacceptable.
\label{lemma:necessary}
\end{lemma}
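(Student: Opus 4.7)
The plan is to reduce the claim to the acceptability criterion of Lemma~\ref{lemma:acceptability}, which already pins down exactly which values of $a$ fail to produce a non--negative model. That lemma tells us that any acceptable zero of $F(a)$ must lie outside the open interval $I \equiv (-2/\Delta x_1,\, -1/(R-\Delta x_{N}/2))$. So it suffices to show that every zero of $F(a)$ trapped between two consecutive singularities of $F(a)$ necessarily lies inside $I$.

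First I would invoke Property~\ref{prop:ga}: the singularities of $F(a)$ are precisely the $n-1$ zeros of $g(a)$, and each of these sits strictly between two consecutive singularities of $g(a)$, located at $a_j = -1/(x_j - x_A)$ for those bins carrying counts. Because every such bin midpoint obeys $\Delta x_1/2 \le x_j - x_A \le R - \Delta x_{N}/2$, each $a_j$ lies in the closed interval $[-2/\Delta x_1,\, -1/(R-\Delta x_{N}/2)]$, with the two endpoints attained only if the first and last bins themselves contain counts.

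The next step is the key nesting observation. Since the zeros of $g(a)$ lie strictly between consecutive $a_j$'s, they are contained in the open interval $I$. Any zero of $F(a)$ that falls between two such singularities of $F(a)$ is therefore a fortiori contained in $I$, and Lemma~\ref{lemma:acceptability} immediately declares it unacceptable. The conclusion follows.

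The only subtle point, and the place where I would be most careful, is the boundary bookkeeping: when the first or last bin has no counts, the corresponding $a_j$ is missing and the outermost singularities of $g(a)$ sit strictly interior to $[-2/\Delta x_1,\, -1/(R-\Delta x_{N}/2)]$. This strengthens the containment rather than weakening it, so the argument survives; but it is worth flagging explicitly so the reader sees that the ``at most $M$ bins'' versus ``exactly $N$ bins'' distinction does not undermine the interval estimate.
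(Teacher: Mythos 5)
Your proposal is correct and follows essentially the same route as the paper: it reduces the claim to Lemma~\ref{lemma:acceptability} and uses Property~\ref{prop:ga} to locate the singularities of $F(a)$ (the zeros of $g(a)$) strictly inside the interval of unacceptability, so any zero of $F(a)$ trapped between them is unacceptable. Your extra remark on the boundary cases (endpoints attained only when the first or last bin carries counts) is a small refinement the paper leaves implicit, but it does not change the argument.
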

\begin{proof}
This condition applies to data with $M> 2$ and $n>2$ unique bins with non--zero counts,  
so that $F(a)$ has $n-1\geq 2$ points of singularity.
In this case, $n-2$ of the $n-1$ solutions of $F(a)=0$ 
are found between the $n-1$ points of singularity
of $F(a)$, which are the zeros of the function $g(a)$. 
According to property~\ref{prop:ga}, 
the zeros of $g(a)$
are located between the $n$ points of singularity of $g(a)$, given by 
\begin{equation}
 a_j = -\dfrac{1}{x_j-x_A} \in \left( -\dfrac{2}{\Delta x_1}, -\dfrac{1}{R-\Delta x_{N}/2}\right),
\label{eq:gaSingularity}
\end{equation}
where $x_j$ is the coordinate of each of the $n\leq M$ unique bins where non--zero
counts are recorded. 
According to lemma~\ref{lemma:acceptability}, these $n-2$ solutions of $F(a)=0$
 fall in the interval on unacceptability.
\end{proof}

Lemma~\ref{lemma:necessary} states 
that all zeros of $F(a)$ that are within points of singularity may be discareded
as unacceptable. The only possibility for an acceptable solution
is the zero that is located
outside of the range of the points of singularity,
although such zero 
is not guaranteed to be acceptable. Accordingly, the following definition is made:

\begin{definition}[External solution of $F(a)=0$]
A solution of $F(a)=0$ is said to be \emph{external} if it falls outside of
the range of the points of singularity of $F(a)$.
\end{definition}
An external solution is therefore found either to the left of the first
point of singularity of $F(a)$, if the asymptotic value $F_{\infty} > 0$, or to the right
of the last, if $F_{\infty} < 0$. For $M=2$, this is the only solution of $F(a)=0$, and
the point of singularity of $F(a)$ is calculated according
to the equation provided at the end of Property~\ref{prop:ga}.

\begin{lemma}[Necessary and sufficient conditions for the acceptability of an external $F(a)=0$ solutions]
An external solution of $F(a)=0$ is acceptable if
and only if the following conditions are met, according to the 
sign of the asymptotic value $F_{\infty}$:
\begin{equation}
\begin{cases}
\begin{aligned}
&F\left(-\dfrac{2}{\Delta x_1}\right) <0,\; &\textrm{ if } F_{\infty} > 0\\
&F\left(-\dfrac{1}{R-\Delta x_{N}/2}\right) >0,\; &\textrm{ if } F_{\infty} < 0.
\end{aligned}
\end{cases}
\label{eq:necessarySufficient}
\end{equation}
\label{lemma:necessarySufficient}
\end{lemma}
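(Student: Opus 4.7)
My plan is to combine the strict monotonicity of $F$ on each unbounded component of continuity with the characterization of acceptable parameter values in lemma~\ref{lemma:acceptability}. By the preceding lemma on the location of zeros, the external solution exists and is unique: when $F_\infty > 0$ it lies in $(-\infty,\,a_{s,\min})$, and when $F_\infty < 0$ it lies in $(a_{s,\max},\,+\infty)$, where $a_{s,\min}$ and $a_{s,\max}$ denote the smallest and largest singularities of $F(a)$ (equivalently, the extreme zeros of $g(a)$). Property~\ref{prop:Faprimee} makes $F$ strictly decreasing on each such interval, property~\ref{prop:FaS} gives the blow-up to $\mp\infty$ at the inner endpoint, and property~\ref{prop:Fae} gives the outer limit $F_\infty$.

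Before applying monotonicity I would verify that the two test points $-2/\Delta x_1$ and $-1/(R-\Delta x_N/2)$ lie in the appropriate unbounded component of continuity of $F$. By property~\ref{prop:ga} the singularities of $g$ lie in $[-2/\Delta x_1,\,-1/(R-\Delta x_N/2)]$, and zeros of $g$ lie strictly between consecutive singularities, so $a_{s,\min} > -2/\Delta x_1$ and $a_{s,\max} < -1/(R-\Delta x_N/2)$. Moreover, $F$ is continuous at every point that is not a zero of $g$; even at a point that coincides with a singularity of $g$ we have $M/g(a)\to 0$, so $F$ extends continuously with the finite value $1+(R/2)\,a$. Hence $F$ is continuous at both test points, and these points lie in the leftmost and rightmost unbounded components, respectively.

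With these facts in hand, the proof reduces to a monotonicity inversion. In the case $F_\infty>0$, the external solution $a^\ast$ satisfies $F(a^\ast)=0$ and lies to the left of $a_{s,\min}$, where $F$ is strictly decreasing; therefore $a^\ast < -2/\Delta x_1$ if and only if $F(-2/\Delta x_1) < F(a^\ast) = 0$. Lemma~\ref{lemma:acceptability} says that the acceptable portion of this interval is exactly $\{a<-2/\Delta x_1\}$, so $a^\ast$ is acceptable if and only if $F(-2/\Delta x_1)<0$, giving the first line of equation~\ref{eq:necessarySufficient}. The case $F_\infty<0$ is entirely symmetric: $a^\ast\in(a_{s,\max},\,+\infty)$, $F$ is strictly decreasing there, and acceptability (namely $a^\ast > -1/(R-\Delta x_N/2)$) is equivalent to $F(-1/(R-\Delta x_N/2)) > 0$.

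The main obstacle, and it is a minor one, is rigorously justifying continuity and monotonicity of $F$ at the test points in the degenerate situation where the first or last bin carries positive counts and the test point therefore coincides with a singularity of $g$; this is handled by the observation above that $M/g$ vanishes at such points so that $F$ extends continuously. Once this continuity is in place, everything else is a direct application of the analytic properties of $F$ and $g$ already established.
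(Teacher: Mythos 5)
Your proof is correct and follows essentially the same route as the paper's: it combines lemma~\ref{lemma:acceptability} with the monotonic decrease of $F(a)$ and its continuity on the unbounded component containing the external zero, and inverts the monotonicity at the test points $-2/\Delta x_1$ and $-1/(R-\Delta x_N/2)$. Your additional checks --- that these test points lie outside the span of the singularities of $F(a)$ and that $F$ extends continuously (to $1+aR/2$) when a test point coincides with a singularity of $g(a)$ --- are details the paper leaves implicit, but they do not change the argument.
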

\begin{proof}
This property is simply based on the continuity of the 
function $F(a)$ between points of singularity,
and on lemma~\ref{lemma:acceptability}, which established that
acceptable values of the parameter $a$ are outside of the interval $(-2/\Delta x_1, -1/(R-\Delta x_{N}/2)$.

(a) if $F_{\infty} > 0$, the solution
of $F(a)=0$ is to the left of the point of singularity. Given that $F'(a)<0$, the solution
will fall in the range of acceptability, viz., $a< -2/R$, if
$F(-2/\Delta x_1) <0$.

(b) Likewise, if $F_{\infty} < 0$, the solution is to the right of the point of singularity,
and the solution is acceptable if
$F(-1/(R-\Delta x_{N}/2))>0$.

The condition is also necessary. In fact, if Equation~\ref{eq:necessarySufficient}
is not satisfied, e.g., $F(-2/\Delta x_1)<0$ for $F_{\infty} > 0$, then the zero will be in the
region of unacceptability.
\end{proof}

This necessary and sufficient condition can be immediately applied
to data that have non--zero counts, and thus points of singularity of $g(a)$, at the extremes
of the range.

\begin{corollary}[Sufficient conditions for data with non--zero counts in
first or last bin]
\label{corollary:firstLast}
If a data set with $M\geq2$ satisfies either of the two conditions
\begin{equation}
\begin{cases}
y_1\geq1 \textrm{ and } F_{\infty}>0\\
y_{N}\geq1 \textrm{ and } F_{\infty}<0
\end{cases}
\end{equation}
then the external solution of the $F(a)=0$ equation is acceptable.
\end{corollary}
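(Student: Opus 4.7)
The plan is to reduce the corollary directly to Lemma~\ref{lemma:necessarySufficient}, which already provides necessary and sufficient conditions for acceptability of the external root---namely $F(-2/\Delta x_1)<0$ when $F_\infty>0$, and $F(-1/(R-\Delta x_N/2))>0$ when $F_\infty<0$. The whole task therefore reduces to verifying those two sign conditions under the additional hypothesis that at least one count sits in the corresponding boundary bin.

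The key observation is that if $y_1\geq 1$, then by Property~\ref{prop:ga} the value $a=-2/\Delta x_1$ is one of the singularities $-1/(x_j-x_A)$ of $g(a)$, because $x_1-x_A=\Delta x_1/2$. Near such a pole $|g(a)|\to\infty$ from both sides, so $M/g(a)\to 0$ and the pole of $g$ induces only a removable singularity of $F$. Passing to the limit in Equation~\ref{eq:Fa} yields
\[
F(-2/\Delta x_1)=1+\frac{R}{2}\Bigl(-\frac{2}{\Delta x_1}\Bigr)=1-\frac{R}{\Delta x_1},
\]
which is strictly negative as soon as $R>\Delta x_1$. Here I would invoke the implicit hypothesis that an external solution exists at all, which by the location--of--zeros lemma requires $n\geq 2$; together with $y_1\geq 1$ this forces at least two distinct non--empty bins, hence $N\geq 2$ and $R>\Delta x_1$. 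Feeding $F(-2/\Delta x_1)<0$ into Lemma~\ref{lemma:necessarySufficient} for the case $F_\infty>0$ then closes the first branch.

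The second branch ($y_N\geq 1$, $F_\infty<0$) is handled by the mirror computation: $a=-1/(R-\Delta x_N/2)$ is the corresponding pole of $g$, and the same removable--singularity argument gives $F(-1/(R-\Delta x_N/2))=1-R/[2(R-\Delta x_N/2)]$, which is strictly positive since $N\geq 2$ ensures $R-\Delta x_N/2>R/2$.

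I expect the only real subtlety, and hence the main obstacle, to be the evaluation of $F$ at a pole of $g$: strictly speaking $F$ is undefined there, so one must first check that both one--sided limits of $M/g(a)$ agree and equal zero, so that the ``boundary value'' of $F$ entering Lemma~\ref{lemma:necessarySufficient} is well defined. This follows at once from the sign pattern of $g$ near its poles described in Property~\ref{prop:ga}, but it is worth saying explicitly to justify writing $F(-2/\Delta x_1)$ in the first place.
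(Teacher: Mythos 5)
Your proposal is correct and follows essentially the same route as the paper: a non--zero count in the first (last) bin makes $a=-2/\Delta x_1$ (resp.\ $a=-1/(R-\Delta x_N/2)$) a pole of $g(a)$, at which $M/g(a)\to 0$ so that $F$ reduces to $1+aR/2$ with the required sign, and Lemma~\ref{lemma:necessarySufficient} then yields acceptability of the external root. Your explicit remarks on the removable singularity of $F$ at poles of $g$ and on needing $R>\Delta x_1$ (via $n\geq 2$) make rigorous two points the paper's proof leaves implicit, but they do not change the argument.
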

\begin{proof}
According to Equation~\ref{eq:Fa},  $F(a)=1+a\cdot R/2$ at points of singularity
for $g(a)$.
A non-zero count in the first bin leads to a singularity of $g(a)$ at $a_j=-\Delta x_1/2$,
and therefore $F(-2/\Delta x_1)<0$.
Therefore, according to lemma~\ref{lemma:necessarySufficient}, 
if $F_{\infty}>0$, there is an acceptable
solution to the left of $-2/\Delta x_1$.
Similar considerations are applicable to the case of a non-zero count in the last
bin, where a singularity of $g(a)$ occurs instead at $a_j=-1/(R-\Delta x_N/2)$,
where $F(-1/(R-\Delta x_N/2))>0$. In this case, if $F_{\infty}<0$, the external
solution of $F(a)=0$ to the right of the last singularity of $g(a)$ is acceptable,
again according to lemma~\ref{lemma:necessarySufficient}.
\end{proof}

Notice how corollary~\ref{corollary:firstLast} does not ensure an acceptable
solution simply if either the last or first bin have non--null counts.
In fact, the presence of an acceptable solution is conditioned also on the sign of
$F_{\infty}$. For example, a data set with a non--null last bin but with a positive
$F_{\infty}$ will not have an acceptable solution to the right of the last singularity.
Finally, the two earlier lemmas can be used to state the uniqueness
of the maximum--likelihood solution for the linear model.
\begin{lemma}[Uniqueness of an acceptable solution of $F(a)=0$]
If there is an acceptable solution of $F(a)=0$, this solution is unique.
\label{lemma:uniqueness}
\end{lemma}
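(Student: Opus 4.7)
The plan is to combine the counting of zeros of $F(a)$ from the earlier location lemma with the necessary condition from Lemma~\ref{lemma:necessary}, leaving at most one candidate that can possibly be acceptable. The key observation is that the asymptotic value $F_\infty$ has a definite sign (assumed nonzero for the generic case), so the sole external zero lies on one and only one side of the range of singularities.

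First I would recall that, by the Location Lemma, $F(a)=0$ has exactly $n-1$ real solutions, of which $n-2$ lie strictly between consecutive points of singularity of $F(a)$ (i.e., zeros of $g(a)$). By Lemma~\ref{lemma:necessary}, each of these $n-2$ interior zeros falls in the interval $(-2/\Delta x_1,\,-1/(R-\Delta x_N/2))$ and is therefore ruled out by Lemma~\ref{lemma:acceptability}. This reduces the set of potentially acceptable solutions to the single external zero.

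Next I would argue that $F(a)$ admits at most one external zero. By Property~\ref{prop:Fae} the function has a common finite limit $F_\infty$ at both $+\infty$ and $-\infty$, and by Property~\ref{prop:Faprimee} it is nonincreasing on every interval of continuity. If $F_\infty>0$, then on the interval to the right of the largest singularity $F(a)$ decreases from $+\infty$ down to $F_\infty>0$, so it has no zero there; the only external zero can therefore sit to the left of the smallest singularity, where $F$ decreases from $F_\infty>0$ to $-\infty$ and so crosses zero exactly once by monotonicity. The case $F_\infty<0$ is symmetric and puts the unique external zero to the right of the largest singularity. In either case there is at most one external zero, and hence at most one candidate acceptable solution.

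The main obstacle is simply tying these two facts together cleanly and handling the degenerate boundary case $F_\infty=0$, which would place the unique external zero at infinity and correspond to no finite acceptable solution; this is consistent with the statement (which only asserts uniqueness \emph{if} an acceptable solution exists) and can be dismissed in one line. The conclusion then follows: the only zero of $F(a)$ not forbidden by Lemma~\ref{lemma:necessary} is the unique external one given by Property~\ref{prop:Faprimee} and Property~\ref{prop:Fae}, so an acceptable solution, if it exists, must coincide with it and is therefore unique.
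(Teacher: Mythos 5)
Your argument is correct and follows essentially the same route as the paper: the $n-2$ zeros of $F(a)$ lying between points of singularity are excluded by Lemma~\ref{lemma:necessary} together with Lemma~\ref{lemma:acceptability}, leaving only the single external zero as a possible acceptable solution. Your additional re-derivation of why the external zero is unique (monotonicity of $F$ plus the common asymptotic limit $F_\infty$) and your remark on the degenerate case $F_\infty=0$ merely spell out what the paper's Location Lemma already packages, so the substance of the proof is the same.
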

\begin{proof}
This property is an immediate consequence of the fact that, of the $n-1$ solutions
of $F(a)=0$, the $n-2$ solutions within points of singularity cannot
be acceptable, as per Lemma~\ref{lemma:necessary}.
Moreover, the remaining solution may be acceptable, according to Lemma~\ref{lemma:necessarySufficient}.
\end{proof}

\begin{example}[Example of data with $M=5$ and an acceptable solution]
\label{ex:M=5}
Figure~\ref{fig:M=5a} shows the $F(a)$ and $g(a)$ functions for a
dataset with $M=5$ counts in 5 equally--spaced bins ($x_i=9.5,29.5, 49.5, 69.5, 89.5$),
and therefore $n=5$.
The function $g(a)$ has $n=5$ points of singularity and $n-1=4$ zeros, 
which correspond to the
4 points of singularity of $F(a)$. There are also 4 zeros of the function $F(a)$,
of which 
$n-2=3$ correspond to unacceptable solutions. The asymptotic value
if $F_{\infty} <0$, and therefore the remaining external zero is to the right of the last
singularity. At the end point $a_2$ of the range of acceptability, the function 
is $F(a_2)>0$, and therefore the last zero leads to an acceptable solution.

The data and all models are shown in Figure~\ref{fig:M=5b}. Notice how the 
model corresponding to the solutions of $F(a)=0$ that are not acceptable lead to
a model that becomes negative; these models cannot be used for the \cstat, and need
to be rejected. The only acceptable model is shown as a solid line, and the corresponding
values of the \cstat\ for each bin are shown in the right panel.
\end{example}

\begin{figure}
\includegraphics[width=3.0in]{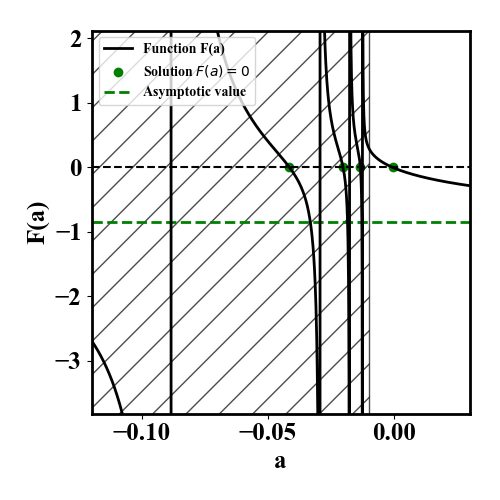}
\includegraphics[width=3.0in]{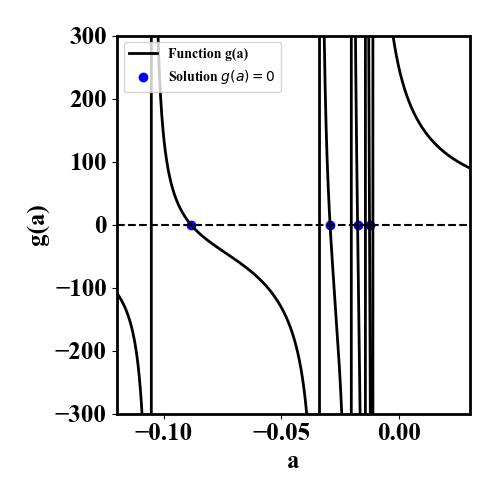}
\caption{Functions $F(a)$ and $g(a)$ for the dataset presented in Example~\ref{ex:M=5}.}
\label{fig:M=5a}
\end{figure}

\begin{figure}
\includegraphics[width=3.2in]{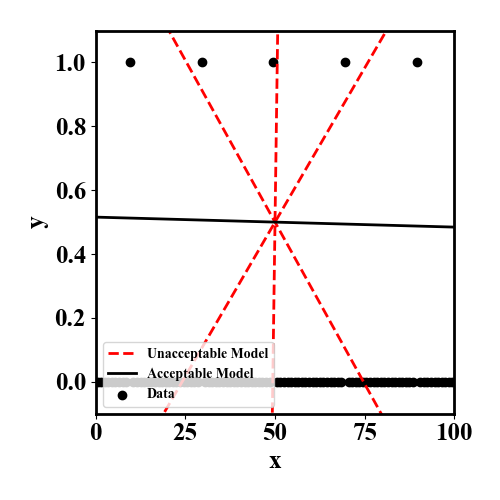}
\includegraphics[width=3.2in]{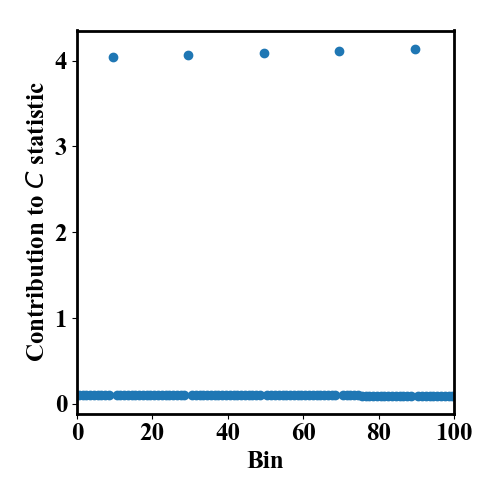}
\caption{(Left) Best--fit linear models for the $M=5$ data presented in Example~\ref{ex:M=5}.
There are 4 solutions of $F(a)=0$, of which the first three lead to models that become
negative somewhere in the $x$ range; the acceptable model corresponds to the largest
solution. (Right) Contributions to \cstat\ for each of the $N=200$ bins,
for a total of $C_{min}=29.96$.}
\label{fig:M=5b}
\end{figure}

The results presented in this section can be summarized by a simple
algorithm that can be used to determine whether there is an
acceptable solution of the equation $F(a)=0$, and to calculate it, when it exists.

\begin{remark}[Algorithm to determine  acceptable best--fit parameters
of Equation~\ref{eq:yscargle}]
\label{remark1}
Consider a dataset with $N$ bins, a range $R$ of the
independent variable between $x_A$ and $x_B$, a total number of integer--valued counts $M$
with a number $n \leq M$ bins with non--null counts.
The existence and value of the best--fit parameters $\{a,\lambda\}$ for the linear model
of Equation~\ref{eq:yscargle} can be determined according to the following steps:
\begin{enumerate}
\item If $n \leq1$, there is no acceptable solution.
\item Calculate the $n\geq 2$ points of singularity for $g(a)$, given analytically by Equation~\ref{eq:gaSingularity}.
\item Numerically calculate the $n-1$ zeros of $g(a)$ between points of singularity. These zeros are
points of singularity for $F(a)$.
\item Calculate the asymptotic value $F_{\infty}$, given analytically by Equation~\ref{eq:Flimitinfinity2}.
\item (Optional) Numerically calculate the $n-2$ zeros of $F(a)$, found between points of singularity of $F(a)$ (also
zeros of $g(a)$. These zeros always lead to unacceptable solutions.
\item Numerically calculate the remaining external zero of $F(a)$, 
either to the left of the 
first point of singularity (if $F_{\infty}>0$),
or to the right of the last singularity (if $F_{\infty}<0$). 
\item Determine the acceptability of the external solution. Two cases are possible:
\begin{itemize}
\item[a.] If this value is \emph{outside} of the range $(-2/\Delta x_1, -1/(R-\Delta x_{N}/2)$, 
then the solution is \emph{acceptable}. The corresponding value of $\lambda$ can be calculate according to Equation~\ref{eq:lambda}.
\item[b.] If this value is \emph{inside} the range $(-2/\Delta x_1, -1/(R-\Delta x_{N}/2)$, 
then the solution is not acceptable. 
The data do not have an acceptable solution with the model of Equation~\ref{eq:yscargle}.
\end{itemize}
\end{enumerate}
The numerical solution of both equations $g(a)=0$ and $F(a)=0$ 
are facilitated by the continuity of the two functions
between the known points of singularity, or 
between the last point of singularity and $\pm \infty$.
An efficient and accurate numerical routine is provided, for example,  by \texttt{python}'s \texttt{root_scalar}, with the 
\texttt{brentq} method. The method requires the specification of an interval, or \texttt{bracket}, where the 
solution is sought. This is either an interval between the two adjacent points of singularity,
or  an open interval either below the first singularity, or above the last singularity. 
For example, a zero of $g(a)$
can be sought in the range  $[ a_j+\epsilon, a_{j+1}-\epsilon]$, where $a_j$ and $a_{j+1}$ are two consecutive
points of singularity of $g(a)$. This bracket
requires a small value $\epsilon$, to be determined according to the separation of the
data points, to ensure that the function $g(a)$ at the two
 extremes of the bracket has opposite signs.
\end{remark}

\subsection{Asymptotic data requirements for acceptable solutions}

This section examines when
data sets with a large number of counts  have an acceptable solution.
It will be shown that, when the counts are distributed
uniformly across the support, data with large $M$ will always have an
acceptable solution.
In general, however, it is possible to find datasets with large $M$ that do not
have an acceptable solution, depending on the distribution of counts.
This observation will lead to a generalization of the simple model
of Equation~\ref{eq:yscargle}, presented in Section~\ref{sec:generalized}.
 First, it is necessary to investigate how
the asymptotic value of $F(a)$ is affected by the distribution of 
the detected counts. 

\begin{property}[Properties of $F_{\infty}$]
\label{prop:Finf}
The asymptotic value of $F(a)$ is given by Equation~\ref{eq:Flimitinfinity2},
and it is negative if
\begin{equation}
\dfrac{1}{M} \sum_{i=1}^N \dfrac{y_i}{x_i-x_A} = \dfrac{1}{M} \sum_{i=1}^M \dfrac{1}{x_i-x_A} > \dfrac{2}{R}
\label{eq:Flimit<0}
\end{equation}
and positive otherwise.
Given that $0 < x_i -x_A < R$, each term $1/(x_i-x_A)$ has a value $<2/R$ if
$x_i$ is above the midpoint of the range, and a value $>2/R$ if $x_i$ is below
the midpoint. The left--hand side of Equation~\ref{eq:Flimit<0}
is the sample mean of the variable $1/(x_i-x_A)$.
\end{property}

It can now  be established that,
for data with a large number of bins and a uniform
distribution of the counts, the asymptotic value of $F(a)$ is negative.
Moreover, when the number of counts $M$ is also large, the external solution
to the right of the last singularity will be acceptable.

\begin{lemma}
\label{lemma:average}
For a large number of bins $N$ and a uniform distribution of counts, 
\[E\left[\dfrac{1}{x_i-x_A}\right]>\dfrac{2}{R},
\] 
	where $E[\;]$ is the expectation based on a parent distribution {\blue for the
	position $x_i$ of the $i$--th count}.
Moreover, when $M$ is large, the asymptotic value of $F(a)$ is negative.
\end{lemma}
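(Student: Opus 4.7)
The statement has two parts that I will tackle in sequence. The first is a purely distributional fact about the mean of $1/(x_i-x_A)$ under a uniform parent, and the second is its consequence for $F_\infty$ via Equation~\ref{eq:Flimit<0} (which is the sign criterion already proven in Property~\ref{prop:Finf}). The bridge between the two claims is the law of large numbers applied to the sample mean that appears inside $F_\infty$.

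For the first claim, I would set $U = x_i - x_A$, so that $U$ is drawn from a uniform distribution on $(0,R)$ (or, in the binned setting, from the discrete uniform on the $N$ bin midpoints $(k-1/2)\Delta x$, $k=1,\dots,N$). In either case $E[U] = R/2$. Since the function $u \mapsto 1/u$ is strictly convex on $(0,\infty)$, \emph{Jensen's inequality} (in its strict form, which applies because $U$ is non-degenerate) gives
\[
E\!\left[\frac{1}{U}\right] > \frac{1}{E[U]} = \frac{2}{R}.
\]
This is the core of the argument. The one subtlety to address is that in the purely continuous case $\int_0^R (1/u)\,du/R$ diverges, so $E[1/U]=+\infty$; this only strengthens the inequality, but for cleanliness I would present the argument in the binned setting, where $U \geq \Delta x_1/2 > 0$ makes the expectation finite. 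A direct computation then verifies the inequality explicitly: $\tfrac{1}{N}\sum_{k=1}^N (k-1/2)^{-1}(\Delta x)^{-1} = \tfrac{1}{R}\sum_{k=1}^N 1/(k-1/2) > 2/R$ for $N\geq 2$, recovering Jensen's conclusion by hand.

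For the second claim, I would invoke the weak (or strong) law of large numbers applied to the i.i.d.\ sample $\{1/(x_i-x_A)\}_{i=1}^M$: as $M \to \infty$,
\[
\frac{1}{M} \sum_{i=1}^{M} \frac{1}{x_i - x_A} \longrightarrow E\!\left[\frac{1}{x_i - x_A}\right] > \frac{2}{R}.
\]
In particular, for all sufficiently large $M$ the sample mean exceeds $2/R$, and by the criterion~(\ref{eq:Flimit<0}) of Property~\ref{prop:Finf} the asymptotic value satisfies $F_\infty < 0$.

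The main obstacle I anticipate is essentially bookkeeping rather than mathematical depth: reconciling the continuum limit (where $E[1/U]$ formally diverges) with the discrete binned setup actually used throughout the paper, and clarifying the probabilistic sense in which ``when $M$ is large, $F_\infty$ is negative'' is meant (i.e., almost surely, or with probability tending to one, under the assumed uniform parent). Once the binned framework is adopted and the LLN is invoked, the proof is a two-line application of Jensen and convergence of sample means, with no serious computational effort required.
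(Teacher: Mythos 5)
Your proof is correct, and the first half takes a genuinely different (and cleaner) route than the paper. The paper does not invoke Jensen's inequality: it writes $x_i-x_A=\Delta x/2+f(R-\Delta x)$ with $f$ approximately uniform on $(0,1)$ for large $N$, evaluates the expectation explicitly as $\int_0^1 \frac{df}{\Delta x/2+f(R-\Delta x)}=\frac{\ln(R-\Delta x/2)-\ln(\Delta x/2)}{R-\Delta x}\simeq \frac{\ln N}{R}$, and concludes that this exceeds $2/R$ only asymptotically in $N$ (essentially once $\ln N>2$). Your argument --- strict Jensen applied to the convex map $u\mapsto 1/u$ with $E[x_i-x_A]=R/2$, backed up by the direct check $\frac{1}{R}\sum_{k=1}^N \frac{1}{k-1/2}>\frac{2}{R}$ for $N\geq 2$ --- establishes the inequality for every $N\geq 2$ rather than only in the large-$N$ limit, and it sidesteps the paper's continuum approximation entirely; your handling of the divergence of $E[1/U]$ in the unbinned continuous case is also a point the paper glosses over. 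What the paper's computation buys instead is the quantitative growth rate $E[1/(x_i-x_A)]\simeq \ln N/R$, which shows how far above $2/R$ the expectation sits, though that extra information is not needed for the lemma. For the second claim both you and the paper argue identically: the law of large numbers drives the sample mean $\frac{1}{M}\sum_{i=1}^M 1/(x_i-x_A)$ to its expectation, so by the sign criterion of Property~\ref{prop:Finf} the asymptotic value $F_\infty$ is negative for large $M$; your explicit remark that this holds with probability tending to one is a precision the paper leaves implicit.
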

\begin{proof}
Assuming bins of uniform width, the range is $R=\Delta x \cdot N$. 
	The distance {\blue of the $i$--th count from the initial
	point of the range is} $x_i - x_A \in (\Delta x/2, R - \Delta x/2)${\blue, and it} can be written as 
\[ 
x_i-x_A  = \dfrac{\Delta x}{2} +f (R-\Delta x),
\]
 where $f \in (0,1)$.
When $N$ is large and the counts are uniformly distributed
in the $N$ bins, it is possible to treat $f$ as 
a continuous and uniformly distributed random variable {\blue in the range $(0,1)$,
thus with unit probability distribution function}. Accordingly, 
the expectation {\blue of the inverse of the distance $x_i-x_A$} can be {\blue approximated} as
\[
	\begin{aligned}
		E\left[\dfrac{1}{x_i- x_A}\right]_{\text{unif}} 
		 \simeq \int_{0}^{1} \dfrac{df}{\Delta x/2+f(R-\Delta x)} = \dfrac{\ln (R-\Delta x/2) - \ln \Delta x/2}{R-\Delta x} \simeq \dfrac{\ln N}{R}
	\end{aligned}
\]
Therefore the expectation is asymptotically
larger than $2/R$ for large $N$. 
Moreover, for a large number of counts $M$,
the law of large numbers ensures that the sample average of $1/(x_i-x_A)$ tends to 
its expectation. Therefore, as $M$ increases,
the asymptotic value of $F(a)$ tends to be negative, and 
the external solution of $F(a)=0$ will
be found to the right of the last point of singularity for $F(a)$.
\end{proof}

It is now possible to state a sufficient condition
that applies to uniformly distributed counts in the large--count regime.

\begin{lemma}[Sufficient condition for an acceptable solution]
For large $M$ and $N$ with uniformly distributed counts 
and a non--null count in the last bin,
the external solution of 
$F(a)=0$ is acceptable and it is found to the right of the last singularity.
\label{lemma:sufficient2}
\end{lemma}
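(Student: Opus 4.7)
The plan is to chain together two earlier results: Lemma~\ref{lemma:average}, which delivers the sign of $F_\infty$ in the asymptotic regime, and Corollary~\ref{corollary:firstLast}, which converts the combination ``$y_N \geq 1$ and $F_\infty < 0$'' directly into acceptability of the external solution. No new machinery is needed; the whole argument is essentially a verification that the hypotheses of these two previous statements are met, followed by a monotonicity observation that pins down the location of the external zero.

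First, I would invoke Lemma~\ref{lemma:average}. Under the stated assumption of large $N$ with uniformly distributed counts, the integral approximation there gives $E[1/(x_i-x_A)] \simeq \ln N / R > 2/R$, and then the law of large numbers with large $M$ forces the sample mean $\frac{1}{M}\sum_{j=1}^M 1/(x_j-x_A)$ to exceed $2/R$ as well. By Property~\ref{prop:Finf}, this is exactly the condition for $F_\infty < 0$, which supplies the first half of the hypotheses of Corollary~\ref{corollary:firstLast}.

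Second, the assumption that the last bin carries a non-null count is precisely $y_N \geq 1$, the second hypothesis of Corollary~\ref{corollary:firstLast}. Applying that corollary directly yields that the external solution of $F(a)=0$ is acceptable. For the location claim, I would then note that by Property~\ref{prop:Faprimee}, $F(a)$ is monotonically decreasing on each interval of continuity, while by Property~\ref{prop:FaS} it jumps from $-\infty$ to $+\infty$ across each singularity; combined with $F_\infty < 0$, the external zero must lie to the right of the largest point of singularity of $F(a)$ (if it lay to the left of the smallest, monotonic decrease from $F_\infty < 0$ to $-\infty$ on that ray would leave no room for a zero).

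The main obstacle, such as it is, is making precise the sense in which ``large $M$ and $N$'' suffice. Lemma~\ref{lemma:average} already trades in asymptotic approximations and law-of-large-numbers convergence rather than explicit thresholds, so the statement of Lemma~\ref{lemma:sufficient2} inherits the same asymptotic flavour. If a quantitative version were desired, one would need to bound the deviation of the sample mean of $1/(x_i-x_A)$ from its expectation $\ln N / R$ by a margin smaller than $\ln N / R - 2/R$, but this refinement is unnecessary for the asymptotic statement as phrased.
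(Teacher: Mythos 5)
Your proposal is correct and follows essentially the same route as the paper: the paper likewise uses Lemma~\ref{lemma:average} to get $F_{\infty}<0$ and then, rather than citing Corollary~\ref{corollary:firstLast}, simply inlines its argument (the non--null last bin puts a singularity of $g(a)$ at $a_j=-1/(R-\Delta x_N/2)$ where $F(a_j)=1+a_jR/2>0$, so by continuity the external zero lies to the right of $a_j$ and is acceptable by Lemma~\ref{lemma:acceptability}). Your monotonicity observation locating the zero to the right of the last singularity is just the paper's characterization of the external solution when $F_{\infty}<0$, so nothing is missing.
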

\begin{proof} 
For data with non--null counts in the last bin, i.e., $y_N\geq 1$, 
the last singularity 
of $g(a)$ occurs at $a_j=-1/(R-\Delta x_{N}/2)$.
At points of singularity for $g(a)$, $F(a_j)=1+R/2 \cdot a_j$,
according to Equation~\ref{eq:Fa}, and therefore
$F(a_j)>0$. Notice that the point $a_j$ marks the boundary of the region of
acceptability for the solutions of $F(a)=0$, according to 
lemma~\ref{lemma:acceptability}.

The last singularity  for $F(a)$, also a zero of $g(a)$,
 will thus occur at a point $a_s$ 
which is to the \emph{left} of $a_j$, and the continuity of $F(a)$ to the right
of the last singularity ensures that $F(a)$ remains positive between $a_s^+$ and
$a_j$.
Also, lemma~\ref{lemma:average} ensures that the asymptotic value
of $F(a)$ is negative. Therefore there is a zero of $F(a)$ to the right
of $a_j$, and this external zero is acceptable, according to 
lemma~\ref{lemma:acceptability}.

\end{proof}

These asymptotic results apply to a uniform distribution
of counts, which is a very restrictive condition. 
When the distribution of counts is not uniform, 
even large--$M$ datasets may not have
an acceptable model.
It goes beyond the scope of this paper 
to seek additional sufficient conditions for covergence,
given the number of variables at play (in particular, the number
of counts $M$, the number of bins $N$ and their size and location,
and the distribution of counts), and the fact that necessary and
sufficient conditions for convergence have been provided
earlier in this section.
Instead, selected numerical simulations 
are presented to quantify the fraction of Poisson datasets that do not
have a non--negative best--fit linear and 
to illustrate a few representative cases.

For this purpose, $100$ data sets
were simulated for various values of the total number of counts $M$, initially assuming
that the $M$ counts were uniformly distributed among $N=100$ equally spaced bins,
following the same pattern of bins along the $x$ axis as in Figures~\ref{fig:M=2}
and \ref{fig:M=3}. As expected, based on the asymptotic results of this
section, for $M \geq 50$, all datasets have an acceptable model 
(Figure~\ref{fig:acceptance}, red curve). 
Then, the same simulations were repeated for
a distribution of counts that is either linearly 
increasing or decreasing towards larger values of $x$, i.e., with 
samples drawn respectively from the probability distributions functions
\[
h(x)=
\begin{cases}
\dfrac{2(x-x_A)}{R^2}\\
\dfrac{2}{R} -\dfrac{2 (x-x_A)}{R^2}    
\end{cases}
\]
with $x \in [0,R]$.~\footnote{Random samples from these distributions
are readily obtained by simulating the associated normalized linear variables
in $y\in(0,1)$ (with distributions of $2y$ and $2-2y$, respectively for an increasing and 
decreasing distribution). Samples of $x$ are then obtained by rescaling samples of $y$ to the range $R=x_B-x_A$
via a linear transformation with $y=(x-x_A)/R$.
Simulations of the normalized distributions for $y$ are easily 
accomplished with the aid of a uniform variable $u$ in $(0,1)$, which
is commonly available in most software packages. With the aid of the quantile function 
$F^{-1}(p)=y$, where $F$ is the cumulative distribution of $y$ (respectively $F(y)=y^2$ and $F(y)=2y-y^2$
for the two linear models),
the variable $y$ is simulated  as $y=F^{-1}(u)$ (see, e.g., Section 4.8 of \cite{bonamente2017book}).
This means that random samples of the normalized increasing and decreasing distributions
are obtained respectively via $y = \sqrt{u}$ and $y = 1-\sqrt{u}$, where $u$ are samples from
a uniform distribution in $(0,1)$.} 
For these cases, the simulations show that the 
number of acceptable models remains smaller even for large values of $M$.
The right panel of Figure~\ref{fig:acceptance} also illustrates the
fraction of data with a negative $F_{\infty}$. As expected according to  
lemma~\ref{lemma:average}, uniformly distributed
data (red curve) have a negative asymptotic $F_{\infty}$ for large $M$;
moreover, the same applies for data distributed with a negative slope (blue
curve). This is explained according to property~\ref{prop:Finf}, since data points
below the mid--point of the range drive the average of $1/(x_i-x_A)$ 
to values greater than $2/R$, and that in turn causes $F_{\infty}$ to be negative.
For data with a positive slope, even for large $M$ there is a large fraction of data
with a positive asymptotic limit of $F(a)$.
These simulations can be used as examples of large--$M$ data that do not
have an acceptable solution using the linear model of Equation~\ref{eq:yscargle}.

\begin{figure}[!t]
\includegraphics[width=3.2in]{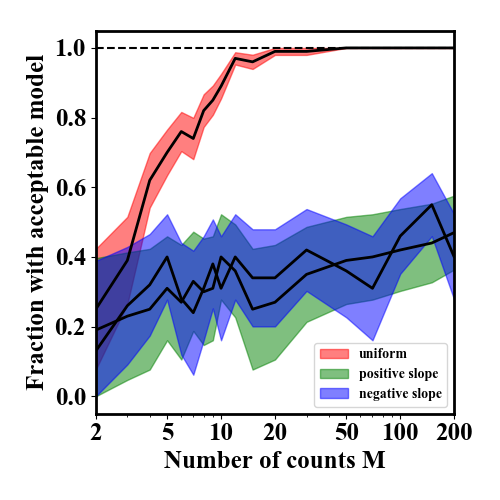}
\includegraphics[width=3.2in]{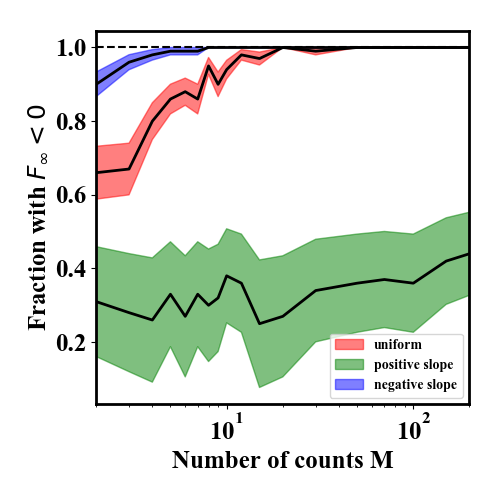}
\caption{(Left:) Fraction of datasets with available best--fit
non--negative linear model, as function of the number of counts $M$.
(Right:) Fraction of datasets with negative asymptotic value $F_{\infty}$.}
\label{fig:acceptance}
\end{figure}

\section{An {\blue extended} linear model with a non--negative solution}
\label{sec:generalized}
The paper has identified
cases where the maximum--likelihood equations do not yield
an acceptable solution for the parameters of the linear model. 
In particular, this is true for all
data with only one count ($M=1$)
and null counts in $N-1$ of the $N$ available bins.
This can simply be viewed as the inability
to constrain two free parameters with just one non--zero data point. 
In such case, it may be sufficient
to model the data with a simple constant model, with a best--fit model
equal to the sample average of the counts in all the bins 
(see, e.g., \cite{bonamente2017book} and \cite{bonamente2020}).
There are also other data sets with $M\geq2$ counts that do not have
an acceptable, non--negative model. One such example was shown in Figure~\ref{fig:M=2}, for a
dataset with $M=2$. Section~\ref{sec:acceptability}
also illustrated data with large $M$ that do not have an acceptable solution (see, e.g.,
Figure~\ref{fig:acceptance}).

Motivated by the need to have a linear model that
is applicable to any situation, this section proposes a simple generalization of the  
linear model of Equation~\ref{eq:yscargle}
that ensures an acceptable maximum--likelihood solution 
using the \cstat\ for any Poisson dataset. 

\begin{definition}[The {\blue extended}  non--negative linear model]
The proposed non--negative linear model is
given by:

(1) the standard linear model of Equation~\ref{eq:yscargle}, when such model
has an acceptable solution; otherwise, 

(2) the model is parameterized as one of the
following three functions:

(A) A one--parameter linear model \emph{pivoted} to zero 
at the initial point $x_A$:

\begin{equation}
f_A(x) = \lambda_A (x-x_A),
\label{eq:ypivotA}
\end{equation}

for which $y_A(x_A)=0$, and with a positive adjustable parameter $\lambda_A\geq 0$.

(B) A one--parameter linear model pivoted to zero 
at the final point $x_B$:

\begin{equation}
f_B(x) = \lambda_B \left( 1 - \dfrac{x-x_A}{R} \right)
\label{eq:ypivotB}
\end{equation}

for which $y_B(x_B) = 0$, with an adjustable parameter $\lambda_B\geq 0$
and therefore a negative slope.

(C) A one--parameter constant model:
\begin{equation}
f_C(x) = \lambda_C.
\label{eq:C}
\end{equation}
\end{definition}

It will be shown that the three models of Equations~\ref{eq:ypivotA}, \ref{eq:ypivotB}
and \ref{eq:C} have simple analytical solutions
for their maximum--likelihood best--fit parameters (respectively
$\lambda_A$, $\lambda_B$ and $\lambda_C$), and therefore
it is always possible to use one of these models as 
an acceptable linear model for any dataset. 

\subsection{Maximum--likelihood solutions for the pivoted
and constant linear models}
\label{sec:MLextension}
For the linear model pivoted at $x_A$,  Equation~\ref{eq:ypivotA}
is used to evaluate the \cstat, Equation~\ref{eq:cstat}.
Assuming that the data covers the range $R$ continuously, as also
assumed for Equation~\ref{eq:summu}, the term
\begin{equation}
\sum_{i=1}^N \mu_i = \int_{x_A}^{x_B} f_A(x) dx = \lambda_A \dfrac{R^2}{2}
\label{eq:summuA}
\end{equation}
 leads to
\begin{equation}
	C_A = \lambda_A R^2 -2M \ln \lambda_{\blue A} + D_A 
\label{eq:CA}
\end{equation}
where
\begin{equation}
 D_A \equiv \left(-2 M 
+ 2 \sum_{i=1}^N y_i \ln y_i - 2 \sum_{i=1}^N y_i \ln \Delta x_i 
-2 \sum_{i=1}^N y_i \ln (x_i - x_A) \right)
\label{eq:DA}
\end{equation}
is a term that is independent of the model, and
therefore plays no role in the minimization of the \cstat.
The best--fit parameter is  given by $\partial C_A/\partial \lambda_A=0$,
leading to the simple analytical solution
\begin{equation}
\lambda_A = \dfrac{2M}{R^2} > 0.
\label{eq:lambdaA}
\end{equation}

For the linear model pivoted at $x_B$, use of Equation~\ref{eq:ypivotB}
into Equation~\ref{eq:cstat} leads to
\begin{equation}
\sum_{i=1}^N \mu_i = \int_{x_A}^{x_B} f_B(x) dx = = \lambda_B \dfrac{R}{2}
\label{eq:summuB}
\end{equation}
and 
\begin{equation} C_B = \lambda_B R -2M \ln \lambda_B + D_B
\label{eq:CB}
\end{equation}
with
\begin{equation} D_B \equiv \left(-2 M 
+ 2 \sum_{i=1}^N y_i \ln y_i - 2 \sum_{i=1}^N y_i \ln \Delta x_i 
-2 \sum_{i=1}^N y_i \ln \left( 1 - \dfrac{x_i - x_A}{R} \right) \right).
\end{equation}
The best--fit parameter is therefore given by
\begin{equation}
\lambda_B = \dfrac{2M}{R} > 0.
\label{eq:lambdaB}
\end{equation}
Finally, the best--fit constant model has a \cstat\ of
\begin{equation}
	C_C = 2 \lambda_C R -2 M \ln {\blue \lambda_c} + D_C
\label{eq:CC}
\end{equation}
with
\begin{equation}
D_C \equiv\left( -2M + 2 \sum_{i=1}^N y_i \ln y_i - 2 \sum_{i=1}^N y_i \ln \Delta x_i \right). 
\end{equation}
This leads to a best--fit parameter
\begin{equation}
\lambda_C = \dfrac{M}{R},
\label{eq:lambdaC}
\end{equation}
which is equivalent to the sample average of the data when multiplied by a uniform $\Delta x$, as 
found in \citep{bonamente2017book}. As already remarked after Equation~\ref{eq:summu},
the equations developed in this section apply to data that cover continuously the range $x_A$ to $x_B$.
Data with gaps in the $x$ variable require a simple modification to these equations that is
presented in Section~\ref{sec:gap}.

\subsection{Use of the {\blue extended} non--negative linear model}

Equation~\ref{eq:yscargle} in combination
with the extensions provided by Equations~\ref{eq:ypivotA}, \ref{eq:ypivotB},
and \ref{eq:C} are to be used according to the following method,
which defines
the solution of the {\blue extended} model.

\begin{definition}[Solution of the {\blue extended} non--negative linear model]
\label{def:generalized}
Solution of the {\blue extended} non--negative linear model is given by:

(1) the solution with the standard linear model of Equation~\ref{eq:yscargle}, if that
solution is acceptable.
As shown in Section~\ref{sec:acceptability} and specifically lemma~\ref{lemma:uniqueness},
this solution is guaranteed to be unique, when it exists.

(2) If a solution with the standard linear model is not available, 
the solution is given by the best--fit model 
that gives the lowest value of the \cstat,
among the three options provided by Equations~\ref{eq:ypivotA}, \ref{eq:ypivotB}
and \ref{eq:C}.
\end{definition}

\begin{lemma}[Existence and uniqueness of solution for the {\blue extended} non--negative
linear model]
There exists one and only one maximum--likelihood 
	solution for the {\blue extended} non--negative linear model
fit to any Poisson--distributed data.
\end{lemma}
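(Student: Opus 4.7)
The plan is to prove existence and uniqueness separately by splitting on the two mutually exclusive branches of Definition~\ref{def:generalized}. For any Poisson dataset, either the standard linear model of Equation~\ref{eq:yscargle} admits an acceptable solution, in which case the definition selects that solution, or it does not, in which case the extended family $\{f_A, f_B, f_C\}$ applies. Since the two cases exhaust all possibilities and do not overlap, it suffices to argue existence and uniqueness within each branch independently.

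For case (1), both existence and uniqueness are already supplied by Lemma~\ref{lemma:uniqueness}, which established that whenever an acceptable zero of $F(a)=0$ exists it is unique; the accompanying $\lambda$ is then determined by Equation~\ref{eq:lambda}. For case (2), I would invoke the closed-form Equations~\ref{eq:lambdaA}, \ref{eq:lambdaB} and \ref{eq:lambdaC}, each of which produces a strictly positive parameter for $M\geq 1$. The corresponding functions are manifestly non-negative throughout $[x_A, x_B]$ by construction, since $x-x_A\geq 0$ on the range, $1-(x-x_A)/R\geq 0$ on the range, and $\lambda_C\geq 0$. Hence all three candidates are always acceptable, so the fallback family is non-empty and the prescription "pick the smallest C-stat" produces a definite selection.

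For uniqueness within case (2), I would verify that each single-parameter C-stat in Equations~\ref{eq:CA}, \ref{eq:CB}, \ref{eq:CC} is strictly convex in its parameter. A one-line calculation gives $\partial^2 C_\alpha/\partial\lambda_\alpha^2 = 2M/\lambda_\alpha^2 > 0$ for $\alpha\in\{A,B,C\}$, so each submodel has a unique minimizer equal to its closed-form value. Combined with Lemma~\ref{lemma:uniqueness} for case (1), this establishes uniqueness within each branch.

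The main obstacle is handling possible ties among $\min C_A$, $\min C_B$ and $\min C_C$ in case (2), which would leave the selection among submodels ambiguous. Inspecting the expressions at the respective minima shows that a tie imposes an exact algebraic relation between the integer counts $\{y_i\}$, the bin positions, and $R$ (through terms like $\sum y_i \ln(x_i-x_A)$ versus $\sum y_i \ln(1-(x_i-x_A)/R)$), so ties occupy a measure-zero locus in configuration space and fail generically. For a rigorous statement one can either strengthen Definition~\ref{def:generalized} with a canonical ordering on ties (e.g., $A \succ B \succ C$), or reformulate the uniqueness claim as uniqueness of the minimum C-stat value rather than of the functional form. The edge case $M=1$, where case (1) has no solution by the analysis of Section~\ref{sec:analytical}, is absorbed harmlessly since the case (2) formulas remain valid for any $M\geq 1$.
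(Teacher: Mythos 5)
Your decomposition is essentially the paper's own: its proof is a two--line appeal to lemma~\ref{lemma:uniqueness} for branch (1) and to definition~\ref{def:generalized} for branch (2), and your added verifications --- non--negativity of $f_A$, $f_B$, $f_C$ by construction, and strict convexity $\partial^2 C_\alpha/\partial\lambda_\alpha^2 = 2M/\lambda_\alpha^2>0$ giving a unique minimizer for each one--parameter sub--model --- correctly make explicit what the paper leaves implicit. The one substantive difference is your discussion of ties among $C_A$, $C_B$, $C_C$, which the paper's proof silently ignores; this is a genuine subtlety in the ``lowest \cstat'' selection rule, but your measure--zero/genericity argument does not settle it, because the data live in a discrete configuration space (integer counts on fixed bins). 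In fact one computes $C_A^{\min}-C_B^{\min}=2\sum_i y_i\ln\bigl((x_B-x_i)/(x_i-x_A)\bigr)$, which vanishes exactly whenever the counts are placed symmetrically about the midpoint of the range, so exact ties can occur and are not a non--generic event. The right way to close the argument is therefore one of the two remedies you already name --- impose a canonical tie--breaking order among the three extensions, or state uniqueness at the level of the minimum \cstat\ value rather than of the selected functional form --- and, strictly speaking, the paper's lemma would benefit from the same amendment; finally, like the paper, you tacitly exclude $M=0$, where the fallback parameters degenerate to zero, which is consistent with the paper's earlier dismissal of that case.
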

\begin{proof} The proof is a direct consequence of the fact that there is at most
one non--negative solution for the model of Equation~\ref{eq:yscargle} (lemma~\ref{lemma:uniqueness}),
	and of definition~\ref{def:generalized} for the solution of the {\blue extended}
	model.
\end{proof}

\begin{remark}[Expanded algorithm for the {\blue extended non--negative} linear model]
\label{remark2}
The algorithm presented in Remark~\ref{remark1} can be extended to
the {\blue non--negative} linear model. When the linear model of Equation~\ref{eq:yscargle}
fails to produce an acceptable solution, the following two additional steps
must be added:
\begin{enumerate}
\setcounter{enumi}{7}
\item Calculate the three additional best--fit linear models (pivoted at A, pivoted at B
and constant) and their \cstat, using the analytical formulas \ref{eq:ypivotA}, \ref{eq:ypivotB}
and \ref{eq:C}.
\item Accept as the best--fit model the one with the lowest \cstat. Notice that if the
original linear model of Equation~\ref{eq:yscargle} is acceptable, its value of the \cstat\
will be lower than that of the other three linear models.
\end{enumerate}
\end{remark}

The use of the {\blue extended} non--negative linear  model
is illustrated in the two following examples.

\begin{example}[Use of the {\blue extended non--negative} model for data with no
acceptable standard linear model]
\begin{figure}[!t]
\includegraphics[width=3.2in]{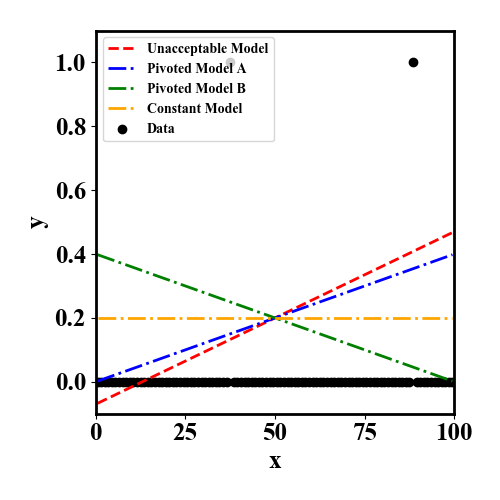}
\includegraphics[width=3.2in]{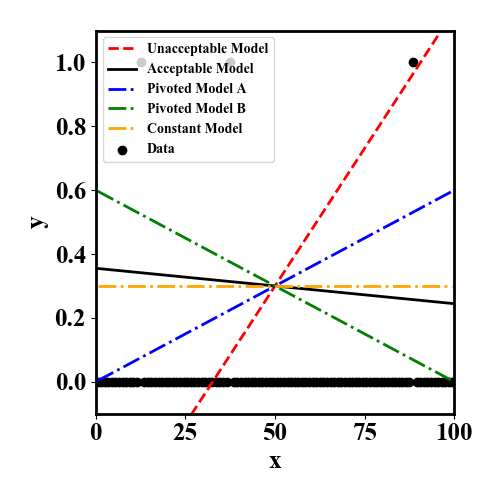}
\caption{Pivoted and constant linear models for the data of
Figure~\ref{fig:M=2} (with $M=2$, left) and Figure~\ref{fig:M=3} 
($M=3$, right).}
\label{fig:extension}
\end{figure}
In the left panel of Figure~\ref{fig:extension} are shown the results for the
same $M=2$ data of Figure~\ref{fig:M=2}, for which a non--negative
linear model according to Equation~\ref{eq:yscargle} could not be found. 
The data can be fit with the pivoted and constant linear models, which yield
best--fit \cstat\ values of $C_A=15.081$, $C_B=18.141$ and $C_C=15.648$.
The values of the \cstat\ indicate that the 
linear model pivoted at $x_A$ is the most accurate representation 
of these data, and should be regarded as the best--fit linear model.
\end{example}

\begin{example}[Use of the {\blue extended non--negative} 
model for data with an acceptable standard linear model]
The right panel of Figure~\ref{fig:extension} shows the
results for the $M=3$ model of Figure~\ref{fig:M=3}, for which
a best--fit non--negative model with the `standard' linear model was
in fact available, for a \cstat\ value of $C=20.996$.
The pivoted and constant linear models yield values of 
$C_A=23.245$, $C_B=22.413$ and $C_C=21.039$, all larger than the value for 
the best--fit standard linear model. This analysis confirms that
the `standard' linear model, when available, is indeed the
most accurate linear representation of the data.
\end{example}

It is in principle possible to devise a linear model different from
those of Equations~\ref{eq:ypivotA}, \ref{eq:ypivotB}
and \ref{eq:C}, that may yield a lower value of the \cstat. There are in fact 
infinitely many such models, e.g., by fixing an arbitrary intercept 
of the $x=x_A$ axis. The choices made by the three simple extensions discussed in
this paper are intended to provide simple alternatives to the full linear model
that have a simple interpretation and likewise simple analytical solutions.

\subsection{Binning and gaps in data}
\label{sec:gap}
The methods of analysis presented in this paper can be applied to data 
with any binning, including data with non--uniform
bin sizes. The bin sizes, however, will have an effect on the best--fit model, as can be seen
by the fact that the function $F(a)$ is a function of $x_j-x_A$, there $x_j$ is the 
center coordinate of the $j$--th bin. 
When Poisson data are collected on an event--by--event basis, the choice of bin size must be
made based on considerations on the methods of collection of the data and the
instruments used for the collection.

In Equations~\ref{eq:summu}, \ref{eq:summuA} and \ref{eq:summuB} it was assumed that the
range of integration of the $x$ variable was continuous, therefore implying that the
data covers the $x_A$ to $x_B$ range without any gaps or missing data. It is possible
to provide a simple generalization to those equations to include gaps in the data.  
This is in fact a situation of practical importance, since certain regions
of the independent variable may be without data for a variety of reasons.
A common situation is the exclusion of portions of the $x$ variable
because of poor calibration of the instrument (e.g.,
the exclusion of a wavelength range because of detector inefficiencies), 
or because an instrument was not operating during certain time intervals.
In these cases, one cannot just assign a value of zero counts to that range
of the independent variable, but rather the intervals must be explicitly removed
from the data, therefore creating gaps in an otherwise continuous variable.

\begin{definition}[Gaps in the data]
A gap in the data is defined as a continuous 
interval of the independent variable
between $x_a$ and $x_b$, of length $R_G=x_b-x_a$, 
that is not covered by any of the bins. 
A Poisson data set may have $g$  non--overlapping gaps 
between $x_{a,j}$ and $x_{b,j}$, $j=1,\dots,g$, with $x_{G,j}=(x_{b,j}+x_{a,j})/2$
the mid--point of each gap and $R_{G,j}=x_{b,j}-x_{a,j}$.
The length of all gaps in the independent variable $x$  is $R_G = \sum R_{G,j}$.
\end{definition}

The following lemmas summarize the changes that need to be made to analyze data that contain
gaps in the independent variable

\begin{lemma}[Modifications to the \cstat\ and to the functions $F(a)$ and $\lambda(a)$ for gaps in the data]
When the data have gaps, the \cstat\ becomes
\label{lemma:Gap1}
\begin{multline}
C = 2\lambda R\left(1+\frac{aR}{2}\right) -2 \lambda \sum_{j=1}^g R_{G,j} (1+a(x_{G,j}-x_A))  \\
-2 M \ln \lambda
        -2\sum_{i=1}^N  y_i \ln (1+a(x_i-x_A)) + D.
\label{eq:cstatscargleGap}
\end{multline}
Moreover, the function whose zero provides the best--fit value of  $a$ becomes
\begin{equation}
F(a) =  1 + a\dfrac{R_{\blue m}}{2} - \dfrac{MR_{\blue m}}{2 g(a)}
\label{eq:FaGap}
\end{equation}
	where {\blue $R$ is replaced by a modified $R_m$ given by}
\begin{equation}
\begin{cases}
	R_{\blue m}\equiv \dfrac{R^2-2 S_G}{R-R_G} \\ 
S_G\equiv \sum_{j=1}^g R_{G,j} \left(x_{G,j}-x_A\right),
\end{cases}
\label{eq:Rprime}
\end{equation}
and the best--fit solution for the parameter $\lambda$ is 
\begin{equation}
\lambda(a) = \frac{M}{R\left(1+a \dfrac{R}{2}\right) - (R_G + a S_G)}.
\label{eq:lambdaGap}
\end{equation}
\end{lemma}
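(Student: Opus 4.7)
The plan is to redo the derivation of Section~\ref{sec:ML} while carefully tracking how $\int f(x)\,dx$ changes when the support of the data excludes the $g$ gaps, and then to verify that the resulting stationary--point equations can be algebraically massaged back into the same form as the no--gap case by replacing $R$ with a suitably defined effective range $R_m$.

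The first step is to recompute $\sum_i \mu_i$. For the linear model of Equation~\ref{eq:yscargle}, direct integration over a single gap gives
\[
\int_{x_{a,j}}^{x_{b,j}} \lambda\bigl(1+a(x-x_A)\bigr)\,dx = \lambda R_{G,j}\bigl(1 + a(x_{G,j} - x_A)\bigr),
\]
so subtracting these $g$ contributions from $\lambda R(1+aR/2)$ produces the two leading terms of Equation~\ref{eq:cstatscargleGap}. The remaining terms $-2M\ln\lambda$, $-2\sum y_i\ln(1+a(x_i-x_A))$, and $D$ depend only on the bins actually present and so carry over unchanged from Equation~\ref{eq:cstatscargle}. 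This establishes Equation~\ref{eq:cstatscargleGap}.

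Next, I would set $\partial C/\partial\lambda = 0$, which gives a linear equation in $1/\lambda$ whose solution, after collecting the gap terms into $R_G$ and $S_G$, is exactly Equation~\ref{eq:lambdaGap}. For the $a$--equation, $\partial C/\partial a = 0$ yields $\lambda(R^2 - 2S_G) = 2 g(a)$ with $g(a)$ as in Equation~\ref{eq:ga}. Substituting Equation~\ref{eq:lambdaGap} for $\lambda$ leaves a numerator proportional to $R^2 - 2S_G$ over a denominator of the form $(R-R_G) + (a/2)(R^2 - 2S_G)$.

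The key algebraic observation, and the only nontrivial step in the argument, is that this denominator factors as $(R-R_G)\bigl(1 + aR_m/2\bigr)$ precisely when $R_m$ is chosen as in Equation~\ref{eq:Rprime}. Once that factorization is spotted, the common factor $R-R_G$ cancels and the equation collapses to $MR_m = 2g(a)(1 + aR_m/2)$, which rearranges immediately into $F(a)=0$ in the form of Equation~\ref{eq:FaGap}. The no--gap limit $R_G = S_G = 0$ gives $R_m = R$ and recovers Section~\ref{sec:ML} verbatim, providing a built--in consistency check.
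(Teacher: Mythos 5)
Your proposal is correct and follows essentially the same route as the paper's proof: subtract the gap integrals from $\sum_i\mu_i$ to get the modified \cstat, set the two partial derivatives to zero, introduce $S_G$, and eliminate $\lambda$ to recover $F(a)=0$ with $R$ replaced by $R_m$. The only difference is cosmetic --- you spell out the factorization of the denominator as $(R-R_G)\bigl(1+aR_m/2\bigr)$, which the paper compresses into ``simple algebraic modifications,'' and your no--gap consistency check is a nice touch but not a new idea.
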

\begin{proof}
The modification to the \cstat\ to account for the presence of
gaps is provided by changing equation~\ref{eq:summu} to
\begin{dmath}
\sum_{i=1}^N \mu_i = \int_{x_A}^{x_B} f(x) dx - \sum_{j=1}^g \int_{x_{a,j}}^{x_{b,j}} f(x) dx = 
\lambda R \left( 1 + a \dfrac{R}{2} \right) - \lambda \sum_{j=1}^g R_{G,j} ( 1+ a(x_{G,j}-x_A)) 
\label{eq:summuGap}
\end{dmath}
The use of Equation~\ref{eq:summuGap} in place of Equation~\ref{eq:summu} leads to
the \cstat\ of Equation~\ref{eq:cstatscargleGap} in place of the original equation~\ref{eq:cstatscargle}.
Taking the derivatives of $C$ with respect to $a$ and $\lambda$ and setting them to zero 
leads to
\[ \frac{\partial C}{\partial \lambda} = 2 R \left( 1 + a \dfrac{R}{2}\right) 
-2 \sum_{j=1}^g R_{G,j} ( 1+ a(x_{G,j}-x_A))  -2\frac{M}{\lambda} = 0, 
\]
and
\[ \frac{\partial C}{\partial a} = \lambda R^2 - 2\lambda \sum_{j=1}^g R_{G,j}  (x_{G,j}-x_A) 
-2 \sum_{i=1}^N y_i \dfrac{(x_i-x_A)}{1+a(x_i-x_A)} =0.
\]
Notice that 
\[ \sum_{j=1}^g R_{G,j} ( 1+ a(x_{G,j}-x_A)) = R_G + a \sum_{j=1}^g R_{G,j} (x_{G,j}-x_A) \]
where $R_G$ is the combined length of all (non--overlapping) gaps.
Defining
\begin{equation}S_G \equiv  \sum_{j=1}^g R_{G,j} (x_{G,j}-x_A) 
\label{eq:SG}
\end{equation}
leads to 
\[   R \left( 1 + a \dfrac{R}{2}\right) - R_G - a S_G - \dfrac{M}{\lambda}=0, \]
	{\blue thus proving Equation~\ref{eq:lambdaGap}}, and 
\[  \lambda R^2 - 2\lambda S_G - 2 g(a) =0 \]
where $g(a)$ is the usual function as defined in Equation~\ref{eq:ga}.
Simple algebraic modifications and elimination of $\lambda$ lead to  
\[
1 + a\dfrac{R_{\blue m}}{2} - \dfrac{MR_{\blue m}}{2 g(a)} = 0
\]
where
\[
R_{\blue m}\equiv \dfrac{R^2-2 S_G}{R-R_G},
\]
thus proving Equation~\ref{eq:FaGap}. 
\end{proof}
Lemma~\ref{lemma:Gap1} shows that, when there are gaps in the independent variable,
the method of analysis to find a solution for $a$ and $\lambda$ proceeds in the same way
as when there are no gaps,
provided the function $F(a)$ uses the $R_{\blue m}$ parameter in place of $R$. Once the 
best--fit value of $a$ is found, $\lambda$ can be calculated analytically by making a
change in the denominator of the function $\lambda(a)$ to account for the gap $R_G$, according
to equation~\ref{eq:lambdaGap}.

\begin{lemma}[Modifications to the \cstat\ and to the best--fit parameters of the pivoted and constant models
for gaps in the data]
\label{lemma:Gap2}
When the data have gaps, the \cstat\ for the pivoted and constant models become
\begin{equation}
\begin{cases}
C_A = \lambda_A R^2 - \lambda_A S_A^2 - 2 M \log \lambda_A + D_A\\
C_B = \lambda_B R -2\lambda_B S_B - 2 M \ln \lambda_B + D_B \\
C_C = 2 \lambda_C (R - R_G) - 2 M \ln \lambda_C + D_C
\end{cases}
\label{eq:Gap2C}
\end{equation}
with
\begin{equation}
\begin{cases}
S_A^2 \equiv \sum_{j=1}^g x_{b,j}^2 - x_{a,j}^2\\
S_B \equiv \sum_{j=1}^g \dfrac{R_{G,j}}{R} (x_B - x_{G,j}).
\end{cases}
\end{equation}
The best--fit model parameters become
\begin{equation}
\begin{cases}
\lambda_A = \dfrac{2M}{R^2 - 2 S_G} \\
\lambda_B = \dfrac{2M}{R - 2 S_B}\\
\lambda_C = \dfrac{M}{R-R_G}.
\end{cases}
\label{eq:Gap2BestFit}
\end{equation}
\end{lemma}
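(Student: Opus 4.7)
The proof will proceed by mirroring the strategy of Lemma~\ref{lemma:Gap1}, applying it in turn to each of the three extended models $f_A$, $f_B$, $f_C$. For each model, the only step of the derivation in Section~\ref{sec:MLextension} that is sensitive to the presence of gaps is the evaluation of $\sum_{i=1}^N \mu_i$, since the remaining terms in $C$ are a sum over the observed bins and are unaffected by the absence of data in the gap intervals. So the plan is to replace
\[
\int_{x_A}^{x_B} f(x)\,dx \;\longrightarrow\; \int_{x_A}^{x_B} f(x)\,dx - \sum_{j=1}^{g}\int_{x_{a,j}}^{x_{b,j}} f(x)\,dx
\]
in Equations~\ref{eq:summuA}, \ref{eq:summuB}, and in the constant-model analogue, then feed the modified expressions into the derivations of Equations~\ref{eq:CA}, \ref{eq:CB}, \ref{eq:CC} and re-optimize.

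Concretely, for the constant model the extra term is simply $\lambda_C R_G$, which immediately gives $2\sum \mu_i = 2\lambda_C(R-R_G)$ and hence $\lambda_C = M/(R-R_G)$. For the model pivoted at $x_A$, each gap integral evaluates to $\tfrac{\lambda_A}{2}\bigl[(x_{b,j}-x_A)^2 - (x_{a,j}-x_A)^2\bigr]$, and I would use the algebraic identity
\[
(x_{b,j}-x_A)^2 - (x_{a,j}-x_A)^2 = R_{G,j}\bigl(x_{b,j}+x_{a,j}-2x_A\bigr) = 2 R_{G,j}(x_{G,j}-x_A)
\]
to rewrite the sum as $\lambda_A S_G$ (equivalently, as $\tfrac{\lambda_A}{2}S_A^2$ under the paper's notation), yielding $2\sum\mu_i = \lambda_A R^2 - \lambda_A S_A^2$ and thus the stated $\lambda_A = 2M/(R^2-2S_G)$ after setting $\partial C_A/\partial \lambda_A=0$. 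For the model pivoted at $x_B$, a similar direct integration gives
\[
\int_{x_{a,j}}^{x_{b,j}} \lambda_B\Bigl(1-\tfrac{x-x_A}{R}\Bigr)\,dx = \lambda_B R_{G,j}\,\dfrac{x_B-x_{G,j}}{R},
\]
and summing over gaps reproduces $\lambda_B S_B$, leading after differentiation to $\lambda_B = 2M/(R-2S_B)$.

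The remaining steps are purely mechanical: assemble the three modified $C_A$, $C_B$, $C_C$ expressions by adding the unchanged $-2M\ln\lambda$ term and the model-independent remainder $D_A$, $D_B$, $D_C$ (which are identical to the no-gap case because they involve only sums over observed bins), then solve the one-parameter minimization analytically as was done in Section~\ref{sec:MLextension}. Since each $C$ is of the form $\alpha \lambda - 2M\ln\lambda + D$ with $\alpha>0$, the minimum is unique and given by $\lambda=2M/\alpha$ (or $M/\alpha$ for the constant model, where the coefficient carries an extra factor of two), matching Equations~\ref{eq:Gap2BestFit}.

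The only genuine obstacle is notational bookkeeping: verifying that the gap-summation quantities arising from the integrals indeed coincide with the paper's definitions of $S_A^2$, $S_B$, and $S_G$. In particular, the identification $S_A^2 = 2 S_G$ hinges on the algebraic rearrangement above and is consistent with the definition $S_A^2 \equiv \sum_j (x_{b,j}^2-x_{a,j}^2)$ under the convention $x_A=0$; I would make this point explicit and, if needed, rewrite $S_A^2$ with $x_A$ carried through so the identity holds for arbitrary origin. No new analytical machinery is required, and no convergence or acceptability questions arise because each extended model has a single positive parameter and a strictly convex $C$ in $\lambda$.
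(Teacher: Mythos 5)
Your proposal is correct and follows essentially the same route as the paper's own proof: subtract the gap integrals from $\sum_i\mu_i$ for each of $f_A$, $f_B$, $f_C$, evaluate them via the same algebraic identities (in particular $(x_{b,j}-x_A)^2-(x_{a,j}-x_A)^2=2R_{G,j}(x_{G,j}-x_A)$, which is exactly how the paper reduces the pivoted-at-$A$ term to $R^2-2S_G$), and then minimize the resulting one-parameter convex expressions analytically. Your remark that the stated $S_A^2=\sum_j(x_{b,j}^2-x_{a,j}^2)$ coincides with $2S_G$ only under the convention $x_A=0$ is a fair observation about the lemma's notation, and the paper's proof handles the general case by carrying the $x_A$ terms through just as you propose.
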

\begin{proof}
For the model pivoted at $A$, equation~\ref{eq:summuA} is modified by the presence of gaps as
\begin{dmath}
\sum_{i=1}^N \mu_i = \int_{x_A}^{x_B} f_A(x) dx - \sum_{j=1}^g \int_{x_{a,j}}^{x_{b,j}} f_A(x) dx= 
\lambda_A \left( \dfrac{x_B^2-x_A^2}{2} - R x_A \right) -\lambda_A \left( \sum_{j=1}^g \dfrac{x_{b,j}^2-x_{a,j}^2}{2}
- \sum_{j=1}^g x_A (x_{b,j}-x_{a,j})\right).
\label{eq:summuAGap}
\end{dmath}
Defining
\begin{equation}
S_A^2 \equiv \sum_{j=1}^g (x_{b,j}^2-x_{a,j}^2)
\label{eq:SA}
\end{equation}
and noticing that
\[ \sum_{j=1}^g x_A (x_{b,j}-x_{a,j}) = x_A R_G \]
leads to 
\[
C_A = \lambda_A (x_B^2-x_A^2) - \lambda_A S_A -2 \lambda_A x_A (R-R_G) -2 M \ln \lambda_A +D_A.\]
Since $S_A^2 - 2 x_A R_G = 2 S_G$ and $(x_B^2-x_A^2) - x_A R = R^2$, 
it follows that 
\[ C_A = \lambda_A (R^2 -2 S_G) -2 M \ln \lambda_A +D_A.\]
Then, taking a derivative of $C_A$ with respect to $\lambda_A$ and setting
it to zero completes the proof for the model pivoted at A.

For the model pivoted at $B$, 
\begin{dmath}
\sum_{i=1}^N \mu_i = \int_{x_A}^{x_B} f_B(x) dx - \sum_{j=1}^g \int_{x_{a,j}}^{x_{b,j}} f_B(x) dx = 
\lambda_B \left(R+x_A- \dfrac{x_B^2-x_A^2}{2R}\right) 
- \lambda_B \sum_{j=1}^g \left(1+\dfrac{x_A}{R}\right)R_{G,j}
-\left(\dfrac{x_{b,j}^2-x_{a,j}^2}{2 R} \right) 
= \lambda_B \left( \dfrac{R}{2} - S_B \right)
\label{eq:summuBGap}
\end{dmath}
where 
\[ S_B \equiv \sum_{j=1}^g R_{G,j}\left(1 - \dfrac{x_{G,j}-x_A}{R} \right) = \sum_{j=1}^g \dfrac{R_{G,j}}{R}(x_B-x_{G,j}).
\]
From this, the equations for $C_B$ and $\lambda_B$ follow after a few simple algebra steps.

The results for the constant model follow immediately from the 
constancy of the function $f_C(x)=\lambda_C$.
\end{proof}
Lemma~\ref{lemma:Gap2} shows that the pivoted and constant models retain a simple
analytical solution even in the presence of gaps in the data. 
An application of the fit to Poisson data with non--uniform bin sizes and with a gap in the
data is provided in the following example.

\begin{example}[Data with non--uniform bin sizes and a gap in the data]
\begin{figure}[!t]
\includegraphics[width=3in]{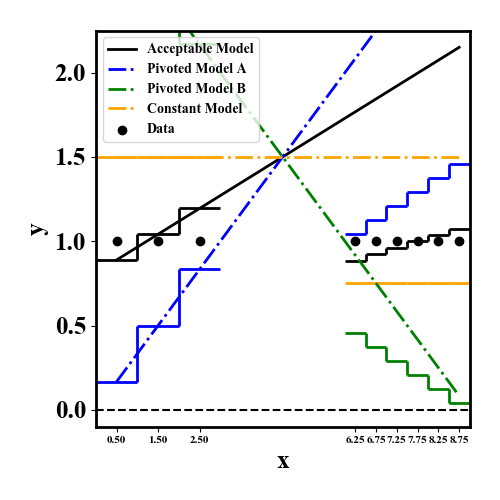}
\includegraphics[width=3in]{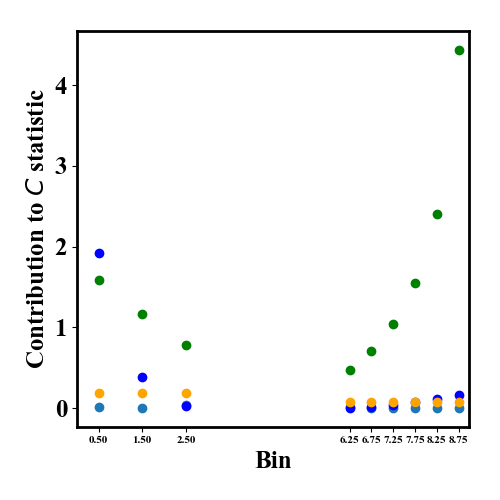}
\caption{Best--fit linear models for data with non--uniform bins and with a gap in the data.
The dot--dashed curve are the density functions and the solid step--wise curves are
the models $y(x_i)$ for the integer data. }
\label{fig:gap}
\end{figure}

The data chosen for this example span a range of the independent variable
between $x_A=0$ and $x_B=9$, with a gap between $x_a=3$ and $x_b=6$.
All nine measurements have a value of $y_i=1$, with bin sizes of $\Delta x_i=1$ for the
first three data points, and $\Delta x_i=\nicefrac{1}{2}$ for the other six data points,
as shown in Figure~\ref{fig:gap}. The data have an acceptable solution for the
standard linear model (in black) with $a=0.188$, $\lambda=0.812$, 
for a best--fit statistic of $C_{\mathrm cmin}=0.078$.
Given the non--uniform bin sizes, the best--fit density function $f(x)$ (black continuous line, in units
of counts--per--bin--size) differs from the best--fit model $y(x_i)$ (black step--wise curve,
in units of counts or counts--per--bin).
The constant model (yellow) has a best--fit parameter
of $\lambda_C=1.5$, according to equation~\ref{eq:Gap2BestFit} with $M=9$, $R=9$ and $R_G=3$,
with $C_C=1.019$,
the linear model pivoted at $A$ has $\lambda_A = 0.333$  and $C_A= 2.735$,
and the linear model pivoted at $B$ has $\lambda_B = 3$ , and $C_B= 14.177$.
\end{example}

In summary, there are no significant additional complication for the analysis of data that
contain a number of gaps or missing data. The following algorithm summarizes the changes 
required to analyze data with gaps.

\begin{remark}[Algorithm to implement changes in the analysis when gaps in the data are present]
\label{remark3}
This algorithm details the additions and modifications required for algorithms~\ref{remark1}
and \ref{remark2}
when there are data gaps present, following the same enumeration.
\begin{enumerate}
\setcounter{enumi}{-1}
\item (Additional step) Calculate the location $x_{a,j}$, $x_{b,j}$ and range $R_{G,j}$ of each gap,
the total gap length $R_G$,  $R_{\blue m}$ and $S_G$ according to equation~\ref{eq:Rprime}.
\setcounter{enumi}{3}
\item Hereafter replace $R$ with $R_{\blue m}$ in the definition of $F(a)$.
\setcounter{enumi}{6}
\item Use equation~\ref{eq:lambdaGap} instead of \ref{eq:lambda} to calculate the
value $\lambda(a)$ corresponding to an acceptable solution $a$.
\setcounter{enumi}{7}
\item For the calculation of the $C$ statistics and best--fit parameters 
of the constant and pivoted models,
use respectively Equations~\ref{eq:Gap2C} (instead of Equations~\ref{eq:CA}, \ref{eq:CB} 
and \ref{eq:CC}) and Equations~\ref{eq:Gap2BestFit} (instead of Equations~\ref{eq:lambdaA},
\ref{eq:lambdaB} and \ref{eq:lambdaC}).
\end{enumerate}

\end{remark}

\subsection{A note on the distribution of the \cstat}

It is well known that, in the large--count limit, the \cmin\ statistic
 -- i.e., the \cstat\ evaluated for the best--fit linear model -- 
is expected to be 
distributed like a $\chi^2$ distribution with $N-2$ degrees of freedom, where
$N$ is the number of bins and $2$ is the
number of adjustable free parameters of the linear model
(e.g., \cite{cash1979} and \cite{bonamente2017book}). 
Moreover, properties  of the \cstat\ for a fixed model with no free parameters
is also known accurately for any value of the parent Poisson 
mean \citep{kaastra2017, bonamente2020}.
What remains to be analyzed in further detail is the effect of free parameters
on the distribution of \cmin, in the low--count regime. 
The purpose of this paper is to present a method to evaluate the best--fit parameters
of the linear model, precisely with the intent to further study the distribution
of \cmin\ via numerical simulations that rely on this method of analysis.

For a significant number of data sets, and especially for data with a small number of counts,
 the only non--negative linear model is one of the
three extensions -- all of them with just one adjustable parameter, instead of
two of the traditional linear model. This requirement that the model
be non--negative was introduced by the use of the Poisson distribution, and 
did not enter the discussion of Gaussian--distributed datasets 
that can be fit with the $\chi^2$ distribution. It is likely that
such new requirement will result in differences between the
distributions of $\chi^2_{min}$ and \cmin\ for the linear model in the low--count regime,
with implications for hypothesis testing and confidence intervals
on the best--fit parameters.
The distribution of the \cmin\ for the linear model
 will be presented in a separate paper.

\section{Discussion and conclusions}
\label{sec:discussion}
This paper has presented a new semi--analytical method to find the best--fit 
parameters of a linear model for the fit to integer--valued counting data, using
the Poisson--based \cstat. The method consists first of finding a solution for the non-linear equation
$F(a)=0$, where $a$ is one of the two parameters of the model. 
 The other
parameter $\lambda$ is then calculated analytically via a simple analytical 
function $\lambda=\lambda(a)$.
The two parameters $a$ and $\lambda$ must be such that the linear model is non--negative in
each bin, in order to ensure the applicability of the Poisson distribution. The analysis presented
in this paper shows that such requirement leads, in fact, to the \emph{uniqueness} of the
best--fit model, when such solution is available. 
This is clearly a very desirable property of the method, and a necessary condition
for the use of this method to analyze Poisson--distributed  data.

This paper has identified cases where low--count Poisson data do not have a suitable non--negative
best--fit linear model according to the standard
parameterization of Equation~\ref{eq:yscargle}. 
For this reason, an {\blue extended}  linear model was proposed
that guarantees a unique non--negative solution for any Poisson data set.
This is accomplished by pivoting the linear model
to either end of the range of the independent variable or by using a simple constant linear model,
when the traditional linear model leads to an unsuitable solution. Thanks to simple analytical
solutions for the best--fit parameter of these extensions, the use of the 
{\blue extended non--negative} linear model
remains straightforward.

The availability of a simple method to identify the best--fit parameters of a linear model
for Poisson data of any number of counts makes it possible to further our understanding of the \cstat.
In particular, it is now possible to study the 
distribution of the \cmin\ statistic
for one of the most commonly used models with adjustable parameters, i.e., the linear model,
especially in the low--count regime where its distribution is not known exactly.

\if\JAS1
{
\bibliographystyle{tfs}
} \fi

\if\JASA1
{
\bibliographystyle{tfs}
} \fi

\end{document}